\newcommand{\mat}[1]{\mathbf{#1}}
\newcommand{\arrayprod}[2]{\mathbin{{#1}.{#2}}}
\newcommand{\real}{\mathop{\mathrm{Re}}}
\newcommand{\imaginary}{\mathop{\mathrm{Im}}}
\newcommand{\nnz}{\mathop{\mathrm{nnz}}}
\newcommand{\powerset}{\mathop{\mathrm{Pow}}}
\newcommand{\coloneq}{\mathrel{{{:}{=}}}}
\newcommand{\imaginaryi}{\mathrm{i}\mkern1mu}
\newcommand{\concat}{{}^\frown}
\newtheorem{thm}{Theorem}
\newtheorem{prop}{Proposition}[thm]
\newtheorem{lem}{Lemma}[thm]
\begin{document}

\title{GraphBLAS Mathematical Opportunities: Parallel Hypersparse, Matrix Based Graph Streaming, and Complex-Index Matrices
\thanks{
Research was sponsored by the Department of the Air Force Artificial Intelligence Accelerator and was accomplished under Cooperative Agreement Number FA8750-19-2-1000. The views and conclusions contained in this document are those of the authors and should not be interpreted as representing the official policies, either expressed or implied, of the Department of the Air Force or the U.S. Government. The U.S. Government is authorized to reproduce and distribute reprints for Government purposes notwithstanding any copyright notation herein.
}
}

\author{\IEEEauthorblockN{
Hayden Jananthan, Jeremy Kepner, Michael Jones, Vijay Gadepally,  \\ Michael Houle, Peter Michaleas, Chasen Milner, Alex Pentland
\\
\IEEEauthorblockA{
MIT
}}}
\maketitle

\begin{abstract}
    The GraphBLAS  high performance library standard has yielded capabilities beyond enabling graph algorithms to be readily expressed in the language of linear algebra. These GraphBLAS capabilities enable new performant ways of thinking about algorithms that include leveraging hypersparse matrices for parallel computation, matrix-based graph streaming, and complex-index matrices.  Formalizing these concepts mathematically provides additional opportunities to apply GraphBLAS to new areas. This paper formally develops parallel hypersparse matrices, matrix-based graph streaming, and complex-index matrices and illustrates these concepts with various examples to demonstrate their potential merits.  
\end{abstract}

\begin{IEEEkeywords}
	GraphBLAS, hypersparse, graph streaming, matrices, parallelism, complex numbers
\end{IEEEkeywords}

\section{Introduction}

The GraphBLAS (Graph Basic Linear Algebra Subprograms) are a good illustration of the  synergistic interplay between mathematics and computational capabilities.  New mathematical ideas drive the development of new capabilities in mathematical software. Likewise, the broad availability of high performance mathematical software with new capabilities can inspire the broader use of their underlying mathematical ideas.   The GraphBLAS standard effort \cite{Mattson2013} developed out of a growing awareness of the utility of graph algorithms in the language of linear algebra \cite{kepner2011graph} leading to further refinement of the mathematical foundations \cite{kepner16mathematical}, programming specifications \cite{bulucc2017design, brock2021introduction, kimmerer2024graphblas}, and implementations \cite{bulucc2011combinatorial, kepner2012dynamic, sundaram2015graphmat, hutchison2015graphulo, zhang2016gbtl, kumar2018ibm, davis2019algorithm, mcmillan2018design}.    

The GraphBLAS has yielded capabilities beyond enabling graph algorithms to be readily expressed in the language of linear algebra.  The mathematical foundations \cite{kepner2011graph}, derived from a more general associative array algebra \cite{kepner2018mathematics}, enabled the GraphBLAS implementations to incorporate hypersparse matrices where the number of nonzero ($\nnz$) entries is much less than the largest dimension \cite{buluc2009parallel}.  Hypersparse matrices in turn have enabled novel approaches to parallel computing that have a played a key role in the analysis of some of the largest corpora of network data \cite{
trigg2022hypersparse, jones2023deployment, jananthan2024anonymized, kepner2024what}.  Similarly, as matrices are the foundation for digital signal processing \cite{oppenheim1975digital, oppenheim1978applications, hayes1996statistical, smith1997scientist, manolakis2011applied}, the GraphBLAS have leveraged these same concepts streaming graph analysis \cite{gadepally2018hyperscaling, kepner19streaming
}.    More recently many interesting uses of complex networks with complex weights have emerged \cite{bottcher2024complex} that offer potential new applications for the GraphBLAS.

This work is part of an ongoing sequence exploring pure-math concepts with potential relevance to big data computations, such as, matrix-graph duality \cite{kepner2011graph, kepner16mathematical, fu2023algebraic}, relational algebra \cite{jananthan2017polystore}, associative arrays \cite{kepner2018mathematics}, incidence matrices \cite{jananthan2017constructing}, semirings \cite{kepner2021mathematics, lee2024eigenvalue}, and fuzzy algebra \cite{min2023fuzzy}.

The structure of the paper is as follows. \S \ref{mathematical preliminaries} covers the necessary mathematical concepts and notation needed for the paper; \S \ref{hypersparse addition} introduces and examines the mathematical background of using the summation of a sequence of hypersparse matrices as a partitioning mechanism for use in parallel, distributed algorithms, along with explicit examples given in the field of cyber networks; \S \ref{matrix-based streaming} introduces and examines the mathematical background of maxtrix-based graph streaming; finally, \S \ref{complex-index matrices} examines the mathematical background of using novel complex number-like values as a mechanism for incorporating the indices of a matrix into the entries to expand the types of algorithms expressible in the language of linear algebra, alongside several explicit examples. Formal proofs that the proposed semirings satisfy the requisite semiring axioms are left to \hyperref[appendix A]{Appendix A}, while proofs of their claimed additional properties are left to Appendix~\hyperref[appendix B]{Appendix B}.

\section{Mathematical Preliminaries}
\label{mathematical preliminaries}

In this section will define some necessary mathematical concepts and notation that underlie the rest of the paper. A central GraphBLAS mathematical concept is that of a \emph{semiring} \cite{golan2013semirings}. Formally, a semiring is an ordered quintuple $(\mathbb{S}, \oplus, \otimes, 0^\mathbb{S}, 1^\mathbb{S})$ consisting of an underlying set $\mathbb{S}$, two binary operations $\oplus, \otimes \colon \mathbb{S} \times \mathbb{S} \to \mathbb{S}$, and two constants $0^\mathbb{S}, 1^\mathbb{S}$ satisfying:
\begin{enumerate}[(i)]
	\item $\oplus$ is commutative, associative, and has identity $0^\mathbb{S}$.
	\item $\otimes$ is associative and has identity $1^\mathbb{S}$.
	\item $\otimes$ distributes over $\oplus$.
	\item $0^\mathbb{S}$ is an annihilator for $\otimes$.
\end{enumerate}
For ease of presentation, the underlying set $\mathbb{S}$ will often be used to stand for the semiring $(\mathbb{S}, \oplus, \otimes, 0^\mathbb{S}, 1^\mathbb{S})$ as a whole. Alternative definitions of semirings exist which either drop the necessity of $0^\mathbb{S}$ and/or $1^\mathbb{S}$ or weaken the necessary properties of $\otimes$. In particular, the GraphBLAS high performance library standard drops the requirement that $\otimes$ be associative; moreover, it introduces the notion of a `GraphBLAS semiring' which goes further, allowing $\otimes$ to operate on a potentially different domain but produce outputs in the domain of $\oplus$, dropping the identities $0^\mathbb{S}$ and $1^\mathbb{S}$ altogether, and dropping the requirement that $\otimes$ distribute over $\oplus$ \cite{brock2023graphblas}.

By an `array' we mean an \emph{associative array} \cite{kepner2018mathematics}, a map $\mat{A} \colon K_1 \times K_2 \to \mathbb{S}$ for which $\mat{A}(k_1, k_2) \neq 0^\mathbb{S}$ for at most finitely many pairs $(k_1, k_2) \in K_1 \times K_2$.\footnote{If $K_1$ and $K_2$ are finite sets, as is the case in any practical implementation or scenario, then this requirement is automatically satisfied.} Given an array $\mat{A} \colon K_1 \times K_2 \to \mathbb{S}$, $\nnz(\mat{A})$ denotes the number of pairs $(k_1, k_2) \in K_1 \times K_2$ for which $\mat{A}(k_1, k_2) \neq 0^\mathbb{S}$. 

Given arrays $\mat{A} \colon K_1 \times K_2 \to \mathbb{S}$ and $\mat{B} \colon K_3 \times K_4 \to \mathbb{S}$, we define arrays $\mat{A} \oplus \mat{B} \colon (K_1 \cup K_3) \times (K_2 \cup K_4) \to \mathbb{S}$ (element-wise addition), $\mat{A} \otimes \mat{B} \colon (K_1 \cap K_3) \times (K_2 \cap K_4) \to \mathbb{S}$ (element-wise multiplication), and $\mat{A} \arrayprod{\oplus}{\otimes} \mat{B} \colon K_1 \times K_4 \to \mathbb{S}$ (array multiplication) as follows. Given any $k_1, k_2$ chosen from appropriate domains\footnotemark:
\begin{align*}
	(\mat{A} \oplus \mat{B})(k_1, k_2) & \coloneq \mat{A}(k_1, k_2) \oplus \mat{B}(k_1, k_2), \\ 
	\\
	(\mat{A} \otimes \mat{B})(k_1, k_2) & \coloneq \mat{A}(k_1, k_2) \otimes \mat{B}(k_1, k_2), \\ 
	(\mat{A} \arrayprod{\oplus}{\otimes} \mat{B})(k_1, k_2) & \coloneq \bigoplus_{k_3 \in K_2 \cap K_3}{(\mat{A}(k_1, k_3) \otimes \mat{B}(k_3, k_2))}. 
\end{align*}
We remark that the well-definedness of array multiplication makes use of many of the properties of semirings: the associativity of $\oplus$ to make sense of the iterated binary operator $\bigoplus$, the fact that $0^\mathbb{S}$ is an annihilator for $\otimes$ to ensure that $\mat{A}(k_1, k_3) \otimes \mat{B}(k_3, k_2) = 0^\mathbb{S}$ whenever $\mat{A}(k_1, k_3) = 0^\mathbb{S}$ or $\mat{B}(k_3, k_2) = 0^\mathbb{S}$, which when combined with the finiteness part of the associative array definition ensures that $\mat{A}(k_1, k_3) \otimes \mat{B}(k_3, k_2) \neq 0^\mathbb{S}$ for at most finitely many values of $k_3$, hence making the iterated binary operator $\bigoplus_{k_3 \in K_2 \cap K_3}$ act only over a finite number of elements, and finally the commutativity of $\oplus$ to account for the lack of an explicit ordering of $K_2 \cap K_3$. 
\footnotetext{To make $\oplus$ well-defined, $\mat{A}(k_1, k_2) \coloneq 0^\mathbb{S}$ whenever $(k_1, k_2) \notin K_1 \times K_2$, and likewise $\mat{B}(k_1, k_2) \coloneq 0^\mathbb{S}$ whenever $(k_1, k_2) \notin K_3 \times K_4$.} 

If $K_1 = \{1, 2, \ldots, N\}$ and $K_2 = \{1, 2, \ldots, M\}$, then an array $\mat{A} \colon K_1 \times K_2 \to \mathbb{S}$ is said to be an $N \times M$ matrix.\footnote{We assume without loss of generality that matrices are $1$-indexed.}

GraphBLAS assumes that the matrices we work with are sparse or even hypersparse. While these do not have formal definitions, for an $N \times M$ matrix $\mat{A}$ we typically describe $\mat{A}$ as \emph{sparse} if $\nnz(\mat{A}) \ll \max(N, M)^2$ and \emph{hypersparse} if $\nnz(\mat{A}) \ll \max(N, M)$. Combining this emphasis on (hyper)sparsity with the recognition that a GraphBLAS semiring does not assume the notion of an additive identity $0^\mathbb{S}$, an $N \times M$ GraphBLAS matrix only stores a set of ordered triples of the form $(k_1, k_2, x) \in \{1, 2, \ldots, N\} \times \{1, 2, \ldots, M\} \times \mathbb{S}$, and in the case that $1 \leq k_3 \leq N$ and $1 \leq k_4 \leq M$ are given for which there is no corresponding stored triple then there is no assumed value the GraphBLAS matrix has at $(k_3, k_4)$. In this way the treatment of arrays and matrices differs from that of GraphBLAS, though in any real implementation the $0^\mathbb{S}$ values of an array or matrix as above would not be stored, so not much generality is lost. To further justify this divergence, in this paper there are numerous situations in which the presense of additive and multiplicative identities is convenient or even necessary.

Given a set $V$, $\powerset(V)$ denotes the set of all subsets of $V$. $V^\star$ denotes the set of all strings with letters drawn from $V$. Given $v_1, v_2, \ldots, v_n \in V$, $\langle v_1, v_2, \ldots, v_n \rangle \in V^\star$ is a string of length $n$. Given two strings $\lambda = \langle v_1, v_2, \ldots, v_n \rangle$ and $\kappa = \langle w_1, w_2, \ldots, w_m \rangle$, their concatenation $\lambda \concat \kappa$ is the string $\langle v_1, v_2, \ldots, v_n, w_1, w_2, \ldots, w_m \rangle$. $\langle \rangle$ denotes the empty string in $V^\star$, the unique string of length $0$. Given sets $A_1, A_2, \ldots, A_n$, $(a_1, a_2, \ldots, a_n) \in A_1 \times A_2 \times \cdots \times A_n$, and $1 \leq k_1 < k_2 < \cdots < k_m \leq n$, we write $\pi_{k_1, k_2, \ldots, k_m}(a_1, a_2, \ldots, a_n) = (a_{k_1}, a_{k_2}, \ldots, a_{k_m})$. Finally, given a function $F \colon A \to B$ and $A' \subseteq A$, then $F[A'] \coloneq \{ F(a) \in B \mid a \in A' \}$.

Unless otherwise specified, we use the following notation: $\mathbb{S}$ denotes the underlying subset of an arbitrary semiring and $x, y, z$ denote elements of $\mathbb{S}$. Bolded, uppercase, upright variables ($\mat{A}, \mat{B}, \mat{C}, \ldots$) denote arrays and matrices. Bolded, lowercase, upright variables ($\mat{x}, \mat{y}, \mat{b}, \ldots$) denote row vectors. $k$ denotes a row or column key while $K$ denotes a set of row or column keys. $\oplus$ denotes the additive operation of a semiring while $\otimes$ denotes the multiplicative operation of a semiring. $M, N, O$ denote dimensions of matrices. $m, n, P$ denote the numbers of elemnts in given sequences of elements. $X, Y, Z$ denote subsets of a given set. $V$ denotes the set of vertices of a graph and $u, v, w$ denote vertices. $E$ denotes the set of edges of a graph and $e, f$ denote edges.

We end this section by reviewing the mathematical ideas underlying the linear algebraic approach to graph algorithms. Given a finite weighted directed graph $(V, E, \mathrm{w} \colon E \to (0, \infty))$, four associative arrays---the adjacency array $\mat{A} \colon V \times V \to \mathbb{S}$, the in-incidence array $\mat{E}_\mathrm{in} \colon V \times E \to \mathbb{S}$, the out-incidence array $\mat{E}_\mathrm{out} \colon V \times E \to \mathbb{S}$, and the edge weight array $\mat{D}_\mathrm{w} \colon E \times E \to \mathbb{S}$---are defined, respectively, by setting:
\begin{align*}
	\mat{A}(u, v) & \coloneq \begin{cases} \mathop{\mathrm{w}}(u, v) & \text{if $(u, v) \in E$,} \\ 0^\mathbb{S} & \text{otherwise,} \end{cases} \\
	\mat{E}_\mathrm{in}(v, e) & \coloneq \begin{cases} 1^\mathbb{S} & \text{if $e = (u, v)$ for some $u \in V$,} \\ 0^\mathbb{S} & \text{otherwise,} \end{cases} \\
	\mat{E}_\mathrm{out}(v, e) & \coloneq \begin{cases} 1^\mathbb{S} & \text{if $e = (v, w)$ for some $w \in V$,} \\ 0^\mathbb{S} & \text{otherwise,} \end{cases} \\
	\mat{D}_\mathrm{w}(e, f) & \coloneq \begin{cases} \mathop{\mathrm{w}}(e) & \text{if $e = f$,} \\ 0^\mathbb{S} & \text{otherwise,} \end{cases}
\end{align*}
where $v, w \in V$ and $e, f \in E$. Note that
\begin{equation*}
	\mat{A} = \mat{E}_\mathrm{out} \arrayprod{\oplus}{\otimes} \mat{D}_\mathrm{w} \arrayprod{\oplus}{\otimes} \mat{E}_\mathrm{in}^\intercal.
\end{equation*}
Graph algorithms are then cast in the language of linear algebra by using the arrays $\mat{A}, \mat{E}_\mathrm{in}, \mat{E}_\mathrm{out}, \mat{D}_\mathrm{w}$. By fixing an order of the vertices $V = \{v_1, v_2, \ldots, v_n\}$ and edges $E = \{e_1, e_2, \ldots, e_m\}$, the arrays $\mat{A}, \mat{E}_\mathrm{in}, \mat{E}_\mathrm{out}, \mat{D}_\mathrm{w}$ can be treated as matrices (of respective dimensions $n \times n$, $n \times m$, $n \times m$, and $m \times m$) which is the standard approach used.


\section{Parallel Algorithms -- Hypersparse Addition Is All You Need}
\label{hypersparse addition}

Dividing matrices, tensors, $n$-dimensional arrays, etc. among different processes using various parallel mapping schemes (block, cyclic, block-cyclic, overlaps, \ldots) is a powerful tool of advanced parallel computing with the inherent advantage of deriving concurrency from data locality. These distributed arrays (also referred to a PGAS---partitioned global address spaces\cite{yelick2007productivity}) often leverage threading and message passing as underlying enabling technologies. Distributed arrays can be implemented effectively in low level languages \cite{de2015partitioned, amarasinghe2023compiler} and high level languages (e.g., Matlab\cite{choy2004star, moler2020history}, Octave\cite{Kepner2009}, Python\cite{byun2023ppython, shajii2023codon}) using both shared and distributed memory hardware. Many message passing libraries present themselves to users as distributed arrays \cite{balay2019petsc, heroux2005overview
}. Distributed arrays align well with high level languages as large arrays are often a core concept within high level languages and operating on large arrays as a whole (vectorization \cite{cai2005performance, birkbeck2007dimension}) is an important optimization technique.

In the context of matrices, distributed arrays divide up the $2$-dimensional space of matrix indices so each processor has blocks of the matrix that it can operate on without inducing interprocessor communication. When communication is required it can be done transparently for the user, hiding the $\mathop{\mathrm{O}}(N_p^2)$ possible communications, where $N_p$ is the number of processors. The essence of parallel partitioning is the local-to-global and global-to-local mappings of matrix indices that allow local computations to leverage locally optimized matrix operations, such as the BLAS (Basic Linear Algebra Subprograms)\cite{lawson1979basic, dongarra1990set, blackford2002updated, barrachina2008evaluation}. For dense matrices, parallel partitioning is a mature technique. When turning attention towards sparse matrices, parallel partitioning raises a different set of considerations. While parallel partitioning of sparse matrices is well understood in certain cases \cite{buluc2008challenges, bulucc2012parallel, merrill2016merge}---most notably structured sparse matrices interacting with dense vectors of the type commonly found in engineering simulations, such as FEM (finite element methods)\cite{law1986parallel, johnsson1990data, yagawa1993parallel}---for hypersparse matrices with randomly distributed non-zero entries, there are abundant opportunities to reexamine how to apply parallel computing techniques. 

Here, we briefly explore the merits of one parallel partition approach, \emph{sum partitioning}, that leverages the unique attributes of hypersparse matrices, whose properties include:
\begin{itemize}
	\item $\nnz(\mat{A}) \ll N$.
	\item Every nonzero entry effectively stores its full $(k_1, k_2, x)$ triple.
	\item $N \times N$ fully covers the desired computational domain.
\end{itemize}
In such a system, it is interesting to consider a simple parallel sum partition of
\begin{equation*}
	\mat{A} = \mat{A}_1 + \mat{A}_2 + \cdots + \mat{A}_p + \cdots + \mat{A}_P = \sum_{p=1}^P{\mat{A}_p},
\end{equation*}
or, more generally,
\begin{equation*}
	\mat{A} = \mat{A}_1 \oplus \mat{A}_2 \oplus \cdots \oplus \mat{A}_p \oplus \cdots \oplus \mat{A}_P = \bigoplus_{p=1}^P{\mat{A}_p}
\end{equation*}
where each processor has a subset of the nonzero entries of the $N \times N$ matrix $\mat{A}$ stored locally in the $N \times N$ matrix $\mat{A}_p$. Sum partitioning can be applied to any matrix at the cost of storing the full triple representation. In the hyperspace regime, the full triple representation is a necessity and so there is no added cost of using sum partitioning. Furthermore, when the nonzero entries are dynamically distributed at random, aggregation operations will nearly always induce full global communication and the benefits of locality based partition schemes are lessened.

A significant benefit of sum partitioning is simplifying parallel algorithm development to the mathematical question of determining how to apply an operation $F$:
\begin{equation*}
	F(\mat{A}) = F(\mat{A}_1 \oplus \mat{A}_2 \oplus \cdots \oplus \mat{A}_P),
\end{equation*}
which can be distilled to separating $F(-)$ into linear and non-linear parts. If $F(-)$ is entirely linear, then
\begin{equation*}
	F(\mat{A}) = F(\mat{A}_1) \oplus F(\mat{A}_2) \oplus \cdots \oplus F(\mat{A}_P)
\end{equation*}
and the parallel computation becomes mapping $F(-)$ on to each $\mat{A}_p$ followed by a $\oplus$ reduction. This presents unique opportunities for merging linear operations and deferring aggregations only when the are required by non-linear operations. This changes the viewpoint of parallel algorithm development from one focused on data partitioning to one focused on managing linear and non-linear operations.

Here the GraphBLAS support of a wide range of linear semiring operations (e.g., the various matrix multiplications $\arrayprod{+}{\times}$, $\arrayprod{\max}{\times}$, $\arrayprod{\min}{+}$, $\arrayprod{\max}{\times}$, $\arrayprod{\min}{\times}$, $\arrayprod{\max}{\min}$, $\arrayprod{\min}{\max}$, \ldots) provides a rich set of tools for writing algorithms in terms of linear semiring operations which obey the necessary linear properties ($\otimes$ distributes over $\oplus$, $\oplus$ is commutative, and $0^\mathbb{S}$ is the additive identity and multiplicative annihilator).  

\noindent \underline{Example 1} Matrix Multiplication. Given any $M \times N$ matrix $\mat{B}$ and $N \times O$ matrix $\mat{C}$, we have
\begin{align*}
	& \mat{B} \arrayprod{\oplus}{\otimes} \mat{A} \arrayprod{\oplus}{\otimes} \mat{C} \\
	& = \mat{B} \arrayprod{\oplus}{\otimes} (\mat{A}_1 \oplus \mat{A}_2 \oplus \cdots \oplus \mat{A}_P) \arrayprod{\oplus}{\otimes} \mat{C} \\
	& = (\mat{B} \arrayprod{\oplus}{\otimes} (\mat{A}_1 \oplus \mat{A}_2 \oplus \cdots \oplus \mat{A}_P)) \arrayprod{\oplus}{\otimes} \mat{C} \\
	& = \left( \bigoplus_{p=1}^P{(\mat{B} \arrayprod{\oplus}{\otimes} \mat{A}_p)} \right) \arrayprod{\oplus}{\otimes} \mat{C} \\
	& = \bigoplus_{p=1}^P{(\mat{B} \arrayprod{\oplus}{\otimes} \mat{A}_p \arrayprod{\oplus}{\otimes} \mat{C})}.
\end{align*}
A particular sub-example is in summing (with respect to $\oplus$) the elements of $\mat{A}$, which can be realized as the product $\mat{1} \arrayprod{\oplus}{\otimes} \mat{A} \arrayprod{\oplus}{\otimes} \mat{1}^\intercal$, where $\mat{1}$ is the $1 \times N$ vector of all $1^\mathbb{S}$s. 

\noindent \underline{Example 2} Element-Wise Multiplication. Another operation that distributes over $\oplus$ is element-wise multiplication denoted by $\otimes$; given an $N \times N$ matrix $\mat{M}$,
\begin{align*}
	& \mat{M} \otimes \mat{A} \\
	& = \mat{M} \otimes (\mat{A}_1 \oplus \mat{A}_2 \oplus \cdots \oplus \mat{A}_P) \\
	& = (\mat{M} \otimes \mat{A}_1) \oplus (\mat{M} \otimes \mat{A}_2) \oplus \cdots \oplus (\mat{M} \otimes \mat{A}_P).
\end{align*}
A concrete application of this would be applying a mask to $\mat{A}$, which can thus be achieved by masking the individual summands $\mat{A}_p$ ($p = 1, 2, \ldots, P$) and then aggregating. 

\noindent \underline{Example 3} Deep Neural Networks (DNN) \cite{clark1955generalization, lippmann1988introduction, lecun2015deep, edelman2024backpropagation}. Consider the case of a $2$-layer ReLU (Rectified Linear Unit)\cite{zeiler2013rectified} feed-forward neural network taking $M$-dimensional inputs $\mat{x} \in \mathbb{R}^M$ and yielding $N$-dimensional outputs $\mat{y} \in \mathbb{R}^N$ by computing
\begin{equation*}
	\mat{y} = \max(\mat{x} \mat{W} + \mat{b}, \mat{0})
\end{equation*}
where $\mat{W}$ is a weight matrix, $\mat{b}$ is a bias vector, and $\mat{0}$ is the $1 \times N$ vector of all $0^\mathbb{S}$s. A parallel sum partition in this context would represent the weight matrix $\mat{W}$ and bias vector $\mat{b}$ as sums $\sum_{p=1}^P{\mat{W}_p}$ and $\sum_{p=1}^P{\mat{b}_p}$, respectively. In such a situation we have
\begin{align*}
	\mat{y} & = \max\left( \mat{x} \left(\sum_{p=1}^P{\mat{W}_p}\right) + \sum_{p=1}^N{\mat{b}_p}, \mat{0} \right) \\
	& = \max\left( \sum_{p=1}^P{(\mat{x} \mat{W}_p + \mat{b}_p)}, \mat{0} \right).
\end{align*}
Although $\max(-, \mat{0})$ is generally \emph{not} linear, in certain situations it can be treated as such. In particular, suppose that there are $N \times N$ $\{0, 1\}$-valued diagonal matrices $\mat{I}_p$ ($p = 1, 2, \ldots, P$) such that $\sum_{p=1}^N{\mat{I}_p}$ is the $N \times N$ identity matrix for matrix multiplication $\mat{I}$, and moreover that $\mat{W}_p = \mat{W} \mat{I}_p$ and $\mat{b}_p = \mat{b} \mat{I}_p$ for each $p = 1, 2, \ldots, P$. Then
\begin{equation*}
	\mat{x} \mat{W}_p + \mat{b}_p = \mat{x} \mat{W} \mat{I}_p  + \mat{b} \mat{I}_p = (\mat{x} \mat{W} + \mat{b}) \mat{I}_p
\end{equation*}
for each $p = 1, 2, \ldots, P$, so given distinct indices $p, q \in \{1, 2, \ldots, P\}$ then we know that $\mat{x} \mat{W}_p + \mat{b}_p$ and $\mat{x} \mat{W}_q + \mat{b}_q$ have no nonzero entries in common. With that observation made, $\max(-, \mat{0})$ \emph{is} linear, so that
\begin{equation*}
	\mat{y} = \max\left( \sum_{p=1}^P{(\mat{x} \mat{W}_p + \mat{b}_p)}, \mat{0} \right) = \sum_{p=1}^P{\max(\mat{x} \mat{W}_p + \mat{b}_p, \mat{0})}.
\end{equation*}
In other words, under such a parallel sum partition the original neural network can be treated as a sum of $P$ disjoint neural networks. 

\noindent \underline{Example 4} Network Traffic Matrices \cite{medina2002traffic, zhang2003fast, tune2013internet}.  For structured decompositions like that of $\mat{W} \mat{I}_p$ above, other operations $F$ also act linearly with respect to $\oplus$. In a traffic matrix $\mat{A}$ the values count the number of packets sent from a source address to a destination address and $\{0, 1\}$-valued diagonal matrices $\mat{I}_p$ select disjoint subsets of the full address space. Aggregate statistics on $\mat{A}$ (e.g., Table~1 in \cite{jananthan2024anonymized}) act linearly with respect to $\oplus$; including the number of unique sources/destinations depending on whether we do source address partitioning: $\sum_{p=1}^P{\mat{I}_p \mat{A}}$,  or destination address partitioning: $\sum_{p=1}^P{\mat{A} \mat{I}_p}$. This also includes cases involving $\max$ as in the DNN example above, such as the maximum number of link packets seen by any source-destination pair.

\section{Mathematical Foundations of Matrix-Based Streaming Graph Algorithms}
\label{matrix-based streaming}

Graphs are abstract mathematical objects with many ways to employ stream processing \cite{feigenbaum2005graph, ediger2012stinger, mcgregor2014graph}. In the context of the GraphBLAS it is useful to focus on matrix-based streaming graph algorithms for which we can draw extensively from the field of digital signal processing (DSP) and is defined in terms of matrices. In DSP the central streaming concept is processing on specific sample window sizes $m$. The number of samples in a window of time $t$ is given by $m = t/dt$, where $f = 1/dt$ is a fixed sampling frequency. DSP is often performed on multiple timescales each with a window size $m_s$. The different windows size are often integer multiples of each other. In particular, the multiples are often powers of $2$ and $3$ if fast Fourier transform (FFT) based methods are to be employed \cite{cochran1967fast, brigham1988fast, duhamel1990fast}.

These DSP concepts map over to matrix-based graph algorithms with the nuance that there is a choice of either fixing $m$ and letting $t$ be variable or fixing $t$ and letting $m$ be variable. In our experience, this choice is determined by the underlying statistical distribution of the graph. If the graph is drawn from a light-tailed distribution, then fixing $t$ is reasonable as the theory for how to adjust light-tail statistics to compensate for variable $m$ is well-established. If the graph is drawn from a heavy-tailed distribution, however, then fixing $m$ works well because less is known on how to adjust heavy-tail statistics to compensate for variable $m$ \cite{nair2020fundamentals}.

Many cases of streaming graphs are drawn from a heavy-tailed distributions \cite{leland1994self, faloutsos1999power, adamic2000power, clauset2009power, barabasi2016network}, which naturally lead to hypersparsity. As such, dense matrix operations can be a challenge, and instead treatment using sparse linear algebra is more effective from a computational perspective. For a sequence of streamed graph adjacency matrices $\ldots, \mat{A}_t, \ldots$ multi-temporal analysis can be done effectively by binary summation. For example, analyzing every $\mat{A}_t$ will focus on phenomena around frequency $\sim f$. Summing consecutive pairs $\mat{A}_t + \mat{A}_{t+dt}$ will be focused around frequency $\sim f/2$. Continuing this process hierarchically enables efficient streaming graph analysis on a wide range of timescales \cite{gadepally2018hyperscaling, kepner19streaming, trigg2022hypersparse}.

A less mentioned assumption of DSP is that at all timescales the sample windows $m_s$ are stored in circular ring buffers. In other words, there is always a timescale at which the data is being dropped and replaced and these ring buffer timescales may be much shorter than standard assumptions on the length of ``age off'' timescales. This is a fundamental choice when thinking about streaming matrix-based graph algorithms. From a practical perspective, we often see streaming data sent to both:
\begin{enumerate}[(1)]
	\item A DSP ring buffer-style processing chain (whose results are stored in a database).
	\item A graph database that is periodically analyzed via queries using extract/transform/load (ETL).
\end{enumerate}
In either case, matrix-based graph algorithms can be used. For (1), GraphBLAS already may be sufficient. For (2), this involves thinking of GraphBLAS as a database, which would require more thought about which concepts from the database and data structures community should be incorporated into GraphBLAS \cite{stonebraker20058, stonebraker2013scidb, besta2023demystifying}. As such, while GraphBLAS can be said to already support DSP-style streaming graph algorithms, database-style streaming graph algorithms remain an open question.

\section{Complex-Index Matrix Graph Algorithms}
\label{complex-index matrices}

\cite{bottcher2024complex} Provides an excellent review of the many interesting uses of complex networks with complex weights. Many of these applications are drawn from physical phenomena with underling complex processes (e.g., quantum mechanics and electromagnetism). This section further develops ideas on how complex index matrices can be useful to graph algorithms. The aim is to provide a mathematical framework for representing such algorithms while fully recognizing that underlying implementations can be achieve with a variety of efficient means that many not actually require storing complex values per se. Let $(V, E, \mathrm{w} \colon E \to (0, \infty))$ be a finite weighted directed graph and $\mat{A}$ its adjacency array.

One scenario that is not immediately expressible in terms of the arrays $\mat{A}, \mat{E}_\mathrm{in}, \mat{E}_\mathrm{out}, \mat{D}_\mathrm{w}$ involves finding vertices and edges. As a concrete example, consider the case where $\mathbb{S}$ is the nonnegative min-plus tropical semiring $([0, \infty] , \min, +, \infty, 0)$. Then $\mat{A}^n(u, v)$ is the least weight of any $n$-hop path from $u$ to $v$, where
\begin{equation*}
	\mat{A}^n \coloneq \underbrace{\mat{A} \arrayprod{\min}{+} \mat{A} \arrayprod{\min}{+} \cdots \arrayprod{\min}{+} \mat{A}}_{\text{$n$ times}}.
\end{equation*}
While $\mat{A}^n(u, v)$ \emph{does} yield the least weight of any $n$-hop path from $u$ to $v$, what it \emph{doesn't} do is provide the paths from $u$ to $v$ which have that least weight. 

In this section we propose a methodology for supporting the computation of vertices and edges via linear algebraic operations (like the aforementioned concrete example) by using a construction similar to that of the field of complex numbers, $\mathbb{C}$, coupled with modified variants of the adjacency and incidence arrays. An example of such a methodology is by redefining the adjacency array $\mat{A}$ by setting $\mat{A}(u, v) \coloneq u + \imaginaryi v$ if there is an edge from $u$ to $v$ and $\mat{A}(u, v) \coloneq 0$ otherwise.

To better motivate our treatment, consider two of the methodologies used to define $\mathbb{C}$. The first is by setting $\mathbb{C} \coloneq \mathbb{R} \times \mathbb{R}$ and then defining, for any $x_1, x_2, y_1, y_2 \in \mathbb{R}$,
\begin{align*}
	(x_1, x_2) + (y_1, y_2) & \coloneq (x_1 + y_1, x_2 + y_2), \\
	(x_1, x_2) \cdot (y_1, y_2) & \coloneq (x_1 y_1 - x_2 y_2, x_1 y_2 + x_2 y_1).
\end{align*}
It can then be checked that $\mathbb{R} \times \mathbb{R}$ with these operations is a field and that $(0, 1) \cdot (0, 1) = (-1, 0)$. The second method is by imagining that a new element $\imaginaryi$ satisfying $\imaginaryi^2 = -1$ has been adjoined to $\mathbb{R}$ to produce a larger set consisting of sums of the form $x + \imaginaryi y$ for $x, y \in \mathbb{R}$.\footnote{In technical terms, this would be realized by taking the polynomial ring $\mathbb{R}[\imaginaryi]$, in which $\imaginaryi$ is an indeterminate (i.e., treated solely as a formal symbol), and quotienting by the maximal ring ideal $\{ \text{polynomials in $\imaginaryi$ divisible by $\imaginaryi^2 + 1$} \}$.}

With this motivation in mind, given a semiring $\mathbb{S}$ we might seek to adjoin a new element $\imaginaryi$ and then consider sums of the form $x + \imaginaryi y$ for $x, y \in \mathbb{S}$. However, an essential aspect of semirings are that they don't necessarily support a notion of negation, so attempting to require something like $\imaginaryi \otimes \imaginaryi$ being the additive inverse of $1^\mathbb{S}$ runs into issues. Both to sidestep these issues and put focus on computations of the form $(x + \imaginaryi 0^\mathbb{S}) \otimes (0^\mathbb{S} + \imaginaryi y)$, we instead require that $\imaginaryi \otimes \imaginaryi = 0^\mathbb{S}$. In other words, $\imaginaryi$ is first treated as a formal symbol without any additional meaning placed upon it, followed by enforcing a new rule for $\imaginaryi$, namely that $\imaginaryi \otimes \imaginaryi = 0^\mathbb{S}$. The specific choice of $\imaginaryi \otimes \imaginaryi = 0^\mathbb{S}$ has no additional meaning but is chosen for simplicity and to start the conversation around this and similar mathematical approaches which expand the expressiveness of linear algebra with regards to graph algorithms.

As such, define the following operations on $\mathbb{S} \times \mathbb{S}$ for any $x_1, x_2, y_1, y_2 \in \mathbb{S}$:
\begin{align*}
	(x_1, x_2) \oplus (y_1, y_2) & \coloneq (x_1 \oplus y_1, x_2 \oplus y_2), \\
	(x_1, x_2) \otimes (y_1, y_2) & \coloneq (x_1 \otimes y_1, (x_1 \otimes y_2) \oplus (x_2 \otimes y_1)).
\end{align*}
We then make the following observations:
\begin{enumerate}[(i)]
	\item $(\mathbb{S} \times \mathbb{S}, \oplus, \otimes, (0^\mathbb{S}, 0^\mathbb{S}), (1^\mathbb{S}, 0^\mathbb{S}))$ is a semiring. (See \hyperref[appendix A]{Appendix A} for the formal proof.)
	\item The map $\mathbb{S} \to \mathbb{S} \times \mathbb{S}$ defined by $x \mapsto (x, 0^\mathbb{S})$ for $x \in \mathbb{S}$ defines a semiring homomorphism from $(\mathbb{S}, \oplus, \otimes, 0^\mathbb{S}, 1^\mathbb{S})$ into $(\mathbb{S} \times \mathbb{S}, \oplus, \otimes, (0^\mathbb{S}, 0^\mathbb{S}), (1^\mathbb{S}, 0^\mathbb{S}))$. In particular, for any $x, y \in \mathbb{S}$ we have $(x, 0^\mathbb{S}) \oplus (y, 0^\mathbb{S}) = (x \oplus y, 0^\mathbb{S})$ and $(x, 0^\mathbb{S}) \otimes (y, 0^\mathbb{S}) = (x \otimes y, 0^\mathbb{S})$.
	\item $(x, y) = (x, 0^\mathbb{S}) \oplus (0^\mathbb{S}, y) = (x, 0^\mathbb{S}) \oplus ((0^\mathbb{S}, 1^\mathbb{S}) \otimes (y, 0^\mathbb{S}))$ for any $x, y \in \mathbb{S}$. By identifying $(x, 0^\mathbb{S})$ and $(y, 0^\mathbb{S})$ with $x$ and $y$, respectively, and writing $\imaginaryi \coloneq (0^\mathbb{S}, 1^\mathbb{S})$, then this justifies the complex notation $(x, y) = x + \imaginaryi y$.
\end{enumerate}
Given $x + \imaginaryi y$, we define:
\begin{align*}
	\real(x + \imaginaryi y) \coloneq x \quad \text{and} \quad 
	\imaginary(x + \imaginaryi y) \coloneq y,
\end{align*}
the \emph{real part} and \emph{imaginary part} of $x + \imaginaryi y$, respectively. 
Moreover, if $\mat{A} \colon V \times V \to \mathbb{S} \times \mathbb{S}$ is an array, then $\real(\mat{A})$ and $\imaginary(\mat{A})$ are arrays $V \times V \to \mathbb{S}$ defined by $\real(\mat{A})(u, v) \coloneq \real(\mat{A}(u, v))$ and $\imaginary(\mat{A})(u, v) \coloneq \imaginary(\mat{A}(u, v))$.

It still remains to show that this construction has utility, so we turn our attention to examples.

\noindent \underline{Example 1} Finding optimal paths between vertices. In other words, our motivating example of finding the $n$-hop paths from one vertex in a finite weighted directed graph to another which have the least weight among all such $n$-hop paths. With $\mathbb{S}$ the nonnegative min-plus tropical semiring, we define a new semiring $(\tilde{\mathbb{S}}, \tilde{\oplus}, \tilde{\otimes}, 0^{\tilde{\mathbb{S}}}, 1^{\tilde{\mathbb{S}}})$ as follows:
\begin{align*}
	\tilde{\mathbb{S}} & \coloneq [0, \infty] \times \powerset(V^\star), \\
	(x, X) \mathbin{\tilde{\oplus}} (y, Y) & \coloneq \begin{cases} (x, X) & \text{if $x < y$,} \\ (y, Y) & \text{if $y < x$,} \\ (x, X \cup Y) & \text{if $x = y$,} \end{cases} \\
	(x, X) \mathbin{\tilde{\otimes}} (y, Y) & \coloneq (x + y, \{\kappa \concat \lambda \mid \kappa \in X, \lambda \in Y\}), \\
	0^{\tilde{\mathbb{S}}} & \coloneq (\infty, \emptyset), \\
	1^{\tilde{\mathbb{S}}} & \coloneq (0, \{\langle\rangle\})
\end{align*}
for $x, y \in [0, \infty]$ and $X, Y \subseteq \powerset(V^\star)$. See \hyperref[appendix A]{Appendix A} for proof of that the semiring axioms hold. Finally, we define $\tilde{\mat{A}} \colon V \times V \to \tilde{\mathbb{S}} \times \tilde{\mathbb{S}}$ by setting
\begin{equation*}
	\tilde{\mat{A}}(u, v) \coloneq (\mat{A}(u, v), \{\langle u, v \rangle\}) + \imaginaryi (\mat{A}(u, v), \{\langle v \rangle\})
\end{equation*}
for any $u, v \in V$.

\begin{prop} \label{paths of optimal weight proposition}
	Recursively define $\mat{B}_n \colon V \times V \to \tilde{\mathbb{S}}$ for $n \geq 1$ by
	\begin{align*}
		\mat{B}_1 & \coloneq \real(\tilde{\mat{A}}), \\
		\mat{B}_{n+1} & \coloneq \mat{B}_n \arrayprod{\tilde{\oplus}}{\tilde{\otimes}} \imaginary(\tilde{\mat{A}}).
	\end{align*}
	Then for any $u, v \in V$, $\mat{B}_n(u, v) = (x, X)$, where $x$ is the least weight among all $n$-hop paths from $u$ to $v$ and $X$ is the set of all $n$-hop paths from $u$ to $v$ with weight $x$.
\end{prop}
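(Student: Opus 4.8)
The plan is to induct on $n$, with two small $\tilde{\mathbb{S}}$-computations handling all the bookkeeping. First, by associativity and commutativity of $\tilde{\oplus}$, for any finite family $\{(a_i, \alpha_i)\}_{i \in I}$ in $\tilde{\mathbb{S}}$ one has $\bigoplus_{i \in I}{(a_i, \alpha_i)} = \bigl( \min_i{a_i},\ \bigcup_{i \,:\, a_i = \min_j a_j}{\alpha_i} \bigr)$ — the iterated $\tilde{\oplus}$ keeps exactly the least-weight entries. Second, $(a, \alpha) \mathbin{\tilde{\otimes}} (\mat{A}(w, v), \{\langle v \rangle\}) = (a + \mat{A}(w, v),\ \{\kappa \concat \langle v \rangle \mid \kappa \in \alpha\})$, so right-multiplying by $\imaginary(\tilde{\mat{A}})(w, v) = (\mat{A}(w, v), \{\langle v \rangle\})$ appends $v$ to every stored path and adds the edge weight $\mat{A}(w, v)$. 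The base case is immediate from the definitions of $\tilde{\mat{A}}$ and $\real$: $\mat{B}_1(u, v) = \real(\tilde{\mat{A}})(u, v) = (\mat{A}(u, v), \{\langle u, v \rangle\})$, and $\langle u, v \rangle$ is the unique $1$-hop path $u \to v$, of weight $\mat{A}(u, v) = \mathop{\mathrm{w}}(u, v)$.

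For the inductive step, assume $\mat{B}_n(u, w) = (x_w, X_w)$ with $x_w$ the least weight of an $n$-hop path $u \to w$ and $X_w$ the set of those achieving it. Unfolding the definition of $\arrayprod{\tilde{\oplus}}{\tilde{\otimes}}$ and applying the second identity termwise,
\begin{align*}
	\mat{B}_{n+1}(u, v) & = \bigoplus_{w \in V}{\bigl( (x_w, X_w) \mathbin{\tilde{\otimes}} (\mat{A}(w, v), \{\langle v \rangle\}) \bigr)} \\
	& = \bigoplus_{w \in V}{\bigl( x_w + \mat{A}(w, v),\ \{\kappa \concat \langle v \rangle \mid \kappa \in X_w\} \bigr)},
\end{align*}
and the first identity evaluates this to the pair with first coordinate $m \coloneq \min_w{(x_w + \mat{A}(w, v))}$ and second coordinate $\bigcup_{w \,:\, x_w + \mat{A}(w, v) = m}{\{\kappa \concat \langle v \rangle \mid \kappa \in X_w\}}$. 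The combinatorial core is that $p \mapsto (w, p')$, where $w$ is the penultimate vertex of $p$ and $p = p' \concat \langle v \rangle$, is a bijection from $(n+1)$-hop paths $u \to v$ onto pairs consisting of a vertex $w$ and an $n$-hop path $p'$ from $u$ to $w$, with $\mathrm{wt}(p) = \mathrm{wt}(p') + \mat{A}(w, v)$ under this correspondence. From this I would derive: (a) the least weight of an $(n+1)$-hop path $u \to v$ is $m$ — ``$\geq$'' since each such $p$ factors with $\mathrm{wt}(p') \geq x_w$, and ``$\leq$'' by extending, for each $w$, an optimal $n$-hop path $u \to w$ by the vertex $v$ — so the first coordinate is correct and, incidentally, $\real(\mat{B}_{n+1}) = \mat{A}^{n+1}$; and (b) when $m < \infty$, an $(n+1)$-hop path $p = p' \concat \langle v \rangle$ has weight $m$ iff its penultimate vertex $w$ attains $x_w + \mat{A}(w, v) = m$ and $p' \in X_w$, so the set of optimal $(n+1)$-hop paths $u \to v$ is exactly the second coordinate above. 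This closes the induction.

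The main obstacle, and the only non-mechanical point, is the degenerate ``weight $\infty$''/unreachable case. Because $\tilde{\mat{A}}$ records a vertex string at every pair, including non-edges, $\mat{B}_n(u, w)$ for a $w$ not reachable from $u$ in $n$ hops has the form $(\infty, Z)$ with $Z \neq \emptyset$ rather than the clean $0^{\tilde{\mathbb{S}}} = (\infty, \emptyset)$; this is harmless whenever the target pair \emph{is} reachable, since those terms carry weight $\infty$ and are absorbed by the $\tilde{\oplus}$, so the induction above runs verbatim for reachable pairs while $\real(\mat{B}_n) = \mat{A}^n$ still reports the least $n$-hop-path weight (with value $\infty$ exactly when no such path exists) for all pairs. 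Accordingly I would state the result with ``$n$-hop path'' read as a genuine directed walk and with the second coordinate of $\mat{B}_n(u, v)$ treated as meaningful precisely when the first coordinate is finite; alternatively, restricting $\tilde{\mathbb{S}}$ to $\{0^{\tilde{\mathbb{S}}}\} \cup \{(x, X) : x < \infty,\ X \neq \emptyset\}$ and setting the path-set in $\tilde{\mat{A}}$ to $\emptyset$ at non-edges recovers the statement verbatim. The two $\tilde{\mathbb{S}}$ identities, the bijection, and the comparisons in (a) are all routine.
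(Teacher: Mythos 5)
Your proof is correct and follows essentially the same route as the paper's: induction on $n$, unfolding $\mat{B}_{n+1}(u,v)$ as $\widetilde{\bigoplus}_{w \in V}\,(x_w, X_w) \mathbin{\tilde{\otimes}} (\mat{A}(w,v), \{\langle v \rangle\})$, and reading off the min/arg-min-union semantics of $\tilde{\oplus}$ together with the append semantics of $\tilde{\otimes}$. Where you differ is in completeness rather than strategy: you state the iterated-$\tilde{\oplus}$ identity and the penultimate-vertex bijection explicitly and check both inclusions, whereas the paper's induction step only argues that every minimal $(n+1)$-hop path is represented by some summand. More substantively, your caveat about the unreachable/non-edge case flags a genuine issue that the paper's proof silently ignores: because $\tilde{\mat{A}}(u,v)$ stores the nonempty sets $\{\langle u,v\rangle\}$ and $\{\langle v\rangle\}$ even when $(u,v)\notin E$ (where $\mat{A}(u,v) = 0^{\mathbb{S}} = \infty$), one gets $\mat{B}_1(u,v) = (\infty, \{\langle u,v\rangle\})$ at a non-edge, so the second coordinate is not the (empty) set of optimal $1$-hop paths, and the defect propagates to any pair admitting no $n$-hop path; reinterpreting entries as walks of possibly infinite weight does not repair the second coordinate either, since only extensions of prefixes lying in $X_w$ are retained. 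Your two remedies---treat the path-set coordinate as meaningful exactly when the weight coordinate is finite, or set the path sets of $\tilde{\mat{A}}$ to $\emptyset$ at non-edges so that such entries become $0^{\tilde{\mathbb{S}}}$ and the statement holds verbatim---are both sound, and your argument that terms with weight $\infty$ are harmless whenever a genuine $n$-hop path exists is exactly what keeps the induction valid for reachable pairs under the paper's original $\tilde{\mat{A}}$. The only nit is notational: $\real(\mat{B}_{n+1}) = \mat{A}^{n+1}$ is an abuse, since $\mat{B}_{n+1}$ is $\tilde{\mathbb{S}}$-valued; you mean the entrywise projection onto the $[0,\infty]$ coordinate.
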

\begin{proof}
	See \hyperref[appendix B]{Appendix B} for proof.
\end{proof}

\noindent \underline{Example 2} $\mathrm{CatValMul}$ and $\mathrm{CalKeyMul}$. The array operations of $\mathrm{CatValMul}$ and $\mathrm{CatKeyMul}$ are operations tracking which nonzero values contribute to an entry in an array product $\mat{A} \arrayprod{\oplus}{\otimes} \mat{B}$ and which the indices of the nonzero values contributing to said entries in an array product $\mat{A} \arrayprod{\oplus}{\otimes} \mat{B}$, respectively. These can both be achieved using a newly constructed semiring $(\mathbb{S}', \oplus', \otimes', 0^{\mathbb{S}'}, 1^{\mathbb{S}'})$ defined as follows, where $\mathbb{T} \coloneq (\mathbb{S} \setminus \{0^\mathbb{S}\}) \times (\mathbb{S} \setminus \{0^\mathbb{S}\}) \times (\mathbb{S} \setminus \{0^\mathbb{S}\}) \times V$ and $X, Y \subseteq \mathbb{T}$:
\begin{align*}
	\mathbb{S}' & \coloneq \powerset(\mathbb{T}), \\
	X \oplus' Y & \coloneq X \cup Y, \\
	X \otimes' Y & \coloneq \{ (x_1 \otimes y_1, x_2 \otimes y_2, x_3 \otimes y_3, u) \mid \\
	& \hphantom{{} \coloneq \{} (x_1, x_2, x_3, u) \in X, (y_1, y_2, y_3, u) \in Y \} \cap \mathbb{T} \\
	0^{\mathbb{S}'} & \coloneq \emptyset, \\
	1^{\mathbb{S}'} & \coloneq \{ (1^\mathbb{S}, 1^\mathbb{S}, 1^\mathbb{S}, u) \mid u \in V \}.
\end{align*}
Then, given any $\mat{A} \colon V \times V \to \mathbb{S}$, we define $\mat{A}' \colon V \times V \to \mathbb{S}' \times \mathbb{S}'$ as follows, where $u, v \in V$:
\begin{equation*}
	\mat{A}'(u, v) \coloneq \begin{cases} \{(1^\mathbb{S}, \mat{A}(u, v), \mat{A}(u, v), u)\} & \text{if $\mat{A}(u, v) \neq 0^\mathbb{S}$,} \\ + \imaginaryi \{(\mat{A}(u, v), 1^\mathbb{S}, \mat{A}(u, v), v)\} & \\ \emptyset + \imaginaryi \emptyset & \text{otherwise.} \end{cases}
\end{equation*}

\begin{prop} \label{catvalmul and catvalmul complex computation}
	Suppose $\mat{A}, \mat{B} \colon V \times V \to \mathbb{S}$ are given and let $\mat{C} \coloneq \imaginary(\mat{A}') \arrayprod{\oplus'}{\otimes'} \real(\mat{B}')$. Then for any $u, v \in V$:
	\begin{enumerate}[(a)]
		\item $\mathop{\mathrm{CatValMul}}(\mat{A}, \mat{B})(u, v) = \pi_{1,2}\left[\mat{C}(u, v)\right]$.
		\item $\mathop{\mathrm{CatKeyMul}}(\mat{A}, \mat{B})(u, v) = \pi_4\left[\mat{C}(u, v)\right]$.
	\end{enumerate}
	Moreover,
	\begin{align*}
		(\mat{A} \arrayprod{\oplus}{\otimes} \mat{B})(u, v) & = \bigoplus\left\{ x \otimes y \mid (x, y, z, w) \in \mat{C}(u, v) \right\} \\
		& = \bigoplus\left\{ z \mid (x, y, z, w) \in \mat{C}(u, v) \right\}.
	\end{align*}
\end{prop}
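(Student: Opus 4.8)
The plan is to compute the array entry $\mat{C}(u,v)$ in closed form and then read all four conclusions directly off of it. First I would unwind the real and imaginary parts of $\mat{A}'$ and $\mat{B}'$: straight from the definition of $\mat{A}'$ together with those of $\real$ and $\imaginary$, one gets $\imaginary(\mat{A}')(u,w) = \{(\mat{A}(u,w), 1^\mathbb{S}, \mat{A}(u,w), w)\}$ when $\mat{A}(u,w) \neq 0^\mathbb{S}$ and $\imaginary(\mat{A}')(u,w) = \emptyset$ otherwise, and symmetrically $\real(\mat{B}')(w,v) = \{(1^\mathbb{S}, \mat{B}(w,v), \mat{B}(w,v), w)\}$ when $\mat{B}(w,v) \neq 0^\mathbb{S}$ and $\emptyset$ otherwise. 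The key feature to note is that both singletons, when nonempty, carry the same fourth coordinate $w$.

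Next I would carry out the array multiplication. Both $\imaginary(\mat{A}')$ and $\real(\mat{B}')$ have row and column key set $V$, so by the definition of array multiplication (and since $\oplus'$ is $\cup$),
\[
  \mat{C}(u,v) = \bigcup_{w \in V} \bigl( \imaginary(\mat{A}')(u,w) \otimes' \real(\mat{B}')(w,v) \bigr).
\]
For a fixed $w$, the product $\imaginary(\mat{A}')(u,w) \otimes' \real(\mat{B}')(w,v)$ is $\emptyset$ unless both factors are nonempty; when they are, the two singletons share the fourth coordinate $w$, so the join in the definition of $\otimes'$ produces the single candidate tuple $(\mat{A}(u,w) \otimes 1^\mathbb{S},\, 1^\mathbb{S} \otimes \mat{B}(w,v),\, \mat{A}(u,w) \otimes \mat{B}(w,v),\, w) = (\mat{A}(u,w),\, \mat{B}(w,v),\, \mat{A}(u,w) \otimes \mat{B}(w,v),\, w)$, which survives the intersection with $\mathbb{T}$ exactly when $\mat{A}(u,w) \otimes \mat{B}(w,v) \neq 0^\mathbb{S}$. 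Since $0^\mathbb{S}$ is an annihilator for $\otimes$, this last condition already forces $\mat{A}(u,w) \neq 0^\mathbb{S}$ and $\mat{B}(w,v) \neq 0^\mathbb{S}$, so the empty-factor cases are subsumed, and I would record
\[
  \mat{C}(u,v) = \bigl\{ (\mat{A}(u,w),\, \mat{B}(w,v),\, \mat{A}(u,w) \otimes \mat{B}(w,v),\, w) \;\big|\; w \in V,\ \mat{A}(u,w) \otimes \mat{B}(w,v) \neq 0^\mathbb{S} \bigr\}.
\]

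From this closed form the conclusions fall out. Applying $\pi_{1,2}$ gives the set of pairs $(\mat{A}(u,w), \mat{B}(w,v))$ of nonzero values whose product is a nonzero summand of the $(u,v)$-entry of $\mat{A} \arrayprod{\oplus}{\otimes} \mat{B}$, which is $\mathrm{CatValMul}(\mat{A},\mat{B})(u,v)$, giving (a); applying $\pi_4$ gives the set of intermediate keys $w$ for which that summand is nonzero, which is $\mathrm{CatKeyMul}(\mat{A},\mat{B})(u,v)$, giving (b). For the value identities, the third coordinate $z$ of each tuple in $\mat{C}(u,v)$ equals its $x \otimes y$ by construction, so the two $\oplus$-sums coincide term by term; and, reading each as indexed by the elements of $\mat{C}(u,v)$ (equivalently by the qualifying $w$), it equals $\bigoplus_{w \in V,\ \mat{A}(u,w)\otimes\mat{B}(w,v)\neq 0^\mathbb{S}} \mat{A}(u,w) \otimes \mat{B}(w,v)$, which in turn equals $\bigoplus_{w \in V} \mat{A}(u,w) \otimes \mat{B}(w,v) = (\mat{A} \arrayprod{\oplus}{\otimes} \mat{B})(u,v)$ since deleting $0^\mathbb{S}$ summands is harmless ($0^\mathbb{S}$ is the identity for $\oplus$).

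I expect the only real friction to be in the middle step: keeping the ``$\cap\,\mathbb{T}$'' filter and the join on the fourth coordinate straight, and confirming that $\pi_{1,2}[\mat{C}(u,v)]$ and $\pi_4[\mat{C}(u,v)]$ coincide with the definitions of $\mathrm{CatValMul}$ and $\mathrm{CatKeyMul}$ on the nose (in particular that a term ``contributes'' precisely when the summand $\mat{A}(u,w) \otimes \mat{B}(w,v)$ is itself nonzero, not merely when $\mat{A}(u,w)$ and $\mat{B}(w,v)$ are). One should also be mindful that the displayed $\oplus$-sums are read with one summand per element of $\mat{C}(u,v)$, since $\oplus$ need not be idempotent; everything else is a routine unfolding of definitions.
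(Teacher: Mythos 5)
Your proof is correct and follows essentially the same route as the paper: unwind $\imaginary(\mat{A}')$ and $\real(\mat{B}')$, compute each $w$-indexed product under $\otimes'$ (with the $\cap\,\mathbb{T}$ filter reducing to the condition $\mat{A}(u,w) \otimes \mat{B}(w,v) \neq 0^\mathbb{S}$), and arrive at the same closed form for $\mat{C}(u,v)$. The only difference is cosmetic: the paper splits explicitly into the four zero/nonzero cases and stops at the closed form, whereas you subsume the empty-factor cases via the annihilator argument and additionally spell out the read-off of (a), (b), and the sum identities, which the paper leaves implicit.
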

\begin{proof}
	See \hyperref[appendix B]{Appendix B} for proof.
\end{proof}

\section{Conclusion}

    Capabilities beyond enabling graph algorithms to be readily expressed in the language of linear algebra have been a welcome byproduct of the GraphBLAS. These included new performant ways of thinking about algorithms that include leveraging hypersparse matrices for parallel computation, matrix-based graph streaming, and complex-index matrices. Enhancing the GraphBLAS to further take advantage of these new areas is simplified by formalizing these concepts mathematically. This paper formally develops parallel hypersparse matrices, matrix-based graph streaming, and complex-index matrices and illustrates these concepts with various examples to demonstrate their potential merits.  

\section*{Acknowledgment}

The authors wish to acknowledge the following individuals for their contributions and support: LaToya Anderson, William Arcand, Sean Atkins, David Bestor, William Bergeron, Chris Birardi, Bob Bond, Alex Bonn, Daniel Burrill, Chansup Byun,  Timothy Davis, Chris Demchak, Phil Dykstra, Alan Edelman, Peter Fisher, Jeff Gottschalk, Thomas Hardjono, Chris Hill,  Charles Leiserson, Kirsten Malvey, Lauren Milechin, Sanjeev Mohindra, Guillermo Morales, Julie Mullen, Heidi Perry, Sandeep Pisharody, Christian Prothmann, Andrew Prout, Steve Rejto, Albert Reuther, Antonio Rosa, Scott Ruppel, Daniela Rus, Mark Sherman, Scott Weed, Charles Yee, Marc Zissman.

\bibliographystyle{ieeetr}
\bibliography{paper}

\section*{Appendix A: Formal Verification of Semiring Axioms}
\label{appendix A}

The first algebra we prove is a semiring is that of $(\mathbb{S} \times \mathbb{S}, \oplus, \otimes, (0^\mathbb{S}, 0^\mathbb{S}), (1^\mathbb{S}, 0^\mathbb{S}))$, where $(\mathbb{S}, \oplus, \otimes, 0^\mathbb{S}, 1^\mathbb{S})$ is a given semiring. As defined, the additive structure $(\mathbb{S} \times \mathbb{S}, \oplus, (0^\mathbb{S}, 0^\mathbb{S}))$ is equal to the product of the additive commutative monoid $(\mathbb{S}, \oplus, 0^\mathbb{S})$ with itself, and hence is a commutative monoid itself.\footnote{Products of algebras preserve satisfaction of equational laws. As commutative monoids are defined solely by equational laws, this implies that products of commutative monoids are commutative monoids.}

It remains to show that the multiplicative structure $(\mathbb{S} \times \mathbb{S}, \otimes, (1^\mathbb{S}, 0^\mathbb{S}))$ forms a monoid, that $(0^\mathbb{S}, 0^\mathbb{S})$ is an annihilator for $\otimes$, and that $\otimes$ distributes over $\oplus$. To that effect, suppose $x_1, x_2, y_1, y_2, z_1, z_2 \in \mathbb{S}$ are arbitrary.
\begin{itemize}
	\item Associativity of $\otimes$:
	\begin{align*}
		& \bigl( (x_1, x_2) \otimes (y_1, y_2) \bigr) \otimes (z_1, z_2) \\
		& = \bigl( x_1 \otimes y_1, (x_2 \otimes y_1) \oplus (x_1 \otimes y_2) \bigr) \otimes (z_1, z_2) \\
		& = \Bigl( x_1 \otimes y_1 \otimes z_1, \\
		& \qquad (x_1 \otimes y_1 \otimes z_2) \oplus \Bigl( \bigl( (x_2 \otimes y_1) \oplus (x_1 \otimes y_2) \bigr) \otimes z_1 \Bigr) \Bigr) \\
		& = \bigl( x_1 \otimes y_1 \otimes z_1, \\
		& \qquad (x_1 \otimes y_1 \otimes z_2) \oplus (x_2 \otimes y_1 \otimes z_1) \oplus (x_1 \otimes y_2 \otimes z_1) \bigr), \\
		& (x_1, x_2) \otimes \bigl( (y_1, y_2) \otimes (z_1, z_2) \bigr) \\
		& = (x_1, x_2) \otimes \bigl( y_1 \otimes z_1, (y_1 \otimes z_2) \oplus (y_2 \otimes z_1) \bigr) \\
		& = \Bigl( x_1 \otimes y_1 \otimes z_1, \\
		& \qquad \Bigl( x_1 \otimes \bigl( (y_1 \otimes z_2) \oplus (y_2 \otimes z_1) \bigr) \Bigr) \oplus (x_2 \otimes y_1 \otimes z_1) \Bigr) \\
		& = \bigl( x_1 \otimes y_1 \otimes z_1, \\
		& \qquad (x_1 \otimes y_1 \otimes z_2) \oplus (x_1 \otimes y_2 \otimes z_1) \oplus (x_2 \otimes y_1 \otimes z_1) \bigr).
	\end{align*}
	
	\item $(1^\mathbb{S}, 0^\mathbb{S})$ is an identity for $\otimes$:
	\begin{align*}
		(1^\mathbb{S}, 0^\mathbb{S}) \otimes (x, y) & = \bigl( 1^\mathbb{S} \otimes x, (0^\mathbb{S} \otimes x) \oplus (1^\mathbb{S} \otimes y) \bigr) \\
		& = (x, 0^\mathbb{S} \oplus y) \\
		& = (x, y), \\
		(x, y) \otimes (1^\mathbb{S}, 0^\mathbb{S}) & = \bigl( x \otimes 1^\mathbb{S}, (x \otimes 0^\mathbb{S}) \oplus (y \otimes 1^\mathbb{S}) \bigr) \\
		& = (x, 0^\mathbb{S} \oplus y) \\
		& = (x, y).
	\end{align*}
	
	\item $(0^\mathbb{S}, 0^\mathbb{S})$ is an annihilator for $\otimes$:
	\begin{align*}
		(0^\mathbb{S}, 0^\mathbb{S}) \otimes (x, y) & = \bigl( 0^\mathbb{S} \otimes x, (0^\mathbb{S} \otimes y) \oplus (0^\mathbb{S} \otimes x) \bigr) \\
		& = (0^\mathbb{S}, 0^\mathbb{S} \oplus 0^\mathbb{S}) \\
		& = (0^\mathbb{S}, 0^\mathbb{S}), \\
		(x, y) \otimes (0^\mathbb{S}, 0^\mathbb{S}) & = \bigl( x \otimes 0^\mathbb{S}, (x \otimes 0^\mathbb{S}) \oplus (y \otimes 0^\mathbb{S}) \bigr) \\
		& = (0^\mathbb{S}, 0^\mathbb{S} \oplus 0^\mathbb{S}) \\
		& = (0^\mathbb{S}, 0^\mathbb{S}).
	\end{align*}
	
	\item Distributivity of $\otimes$ over $\oplus$:
	\begin{align*}
		& (x_1, x_2) \otimes \bigl( (y_1, y_2) \oplus (z_1, z_2) \bigr) \\
		& = (x_1, x_2) \otimes (y_1 \oplus z_1, y_2 \oplus z_2) \\
		& = \Bigl( x_1 \otimes (y_1 \oplus z_1), \\
		& \hphantom{{} = \Bigl(} \bigl( x_2 \otimes (y_1 \oplus z_1) \bigr) \oplus \bigl( x_1 \otimes (y_2 \oplus z_2) \bigr) \Bigr) \\
		& = \bigl( (x_1 \otimes y_1) \oplus (x_1 \otimes z_1), \\
		& \hphantom{{} = \Bigl(} (x_2 \otimes y_1) \oplus (x_2 \otimes z_1) \oplus (x_1 \otimes y_2) \oplus (x_1 \otimes z_2) \bigr) \\
		& = \bigl( x_1 \otimes y_1, (x_2 \otimes y_1) \oplus (x_1 \otimes y_2) \bigr) \\
		& \hphantom{{} = } \oplus \bigl( x_1 \otimes z_1, (x_2 \otimes z_1) \oplus (x_1 \otimes z_2) \bigr) \\
		& = \bigl( (x_1, x_2) \otimes (y_1, y_2) \bigr) \oplus \bigl( (x_1, x_2) \otimes (z_1, z_2) \bigr).
	\end{align*}
	Showing $\bigl( (y_1, y_2) \oplus (z_1, z_2) \bigr) \otimes (x_1, x_2) = \bigl( (y_1, y_2) \otimes (x_1, x_2) \bigr) \oplus \bigl( (z_1, z_2) \otimes (x_1, x_2) \bigr)$ is analogous.
\end{itemize}

We next show that $(\tilde{\mathbb{S}}, \tilde{\oplus}, \tilde{\otimes}, 0^{\tilde{\mathbb{S}}}, 1^{\tilde{\mathbb{S}}})$ is a semiring. Given $x, y \in [0, \infty]$ and $X \subseteq \powerset(V^\star)$, let $\delta_{x, y}(X) \coloneq X$ when $x = y$ and $\delta_{x, y}(X) \coloneq \emptyset$ otherwise, so we may write
\begin{equation*} 
	(x, X) \mathbin{\tilde{\oplus}} (y, Y) = (\min(x, y), \delta_{x, \min(x, y)}(X) \cup \delta_{y, \min(x, y)}(Y)).
\end{equation*}
Additionally define
\begin{equation*}
	X \concat Y \coloneq \{ \kappa \concat \lambda \mid \kappa \in X, \lambda \in Y \},
\end{equation*}
so that $(x, X) \mathbin{\tilde{\otimes}} (y, Y) = (x + y, X \concat Y)$.

\begin{lem} \label{tilde oplus lemma}
	Suppose $x, y, z \in [0, \infty]$ and $X, Y, Z \subseteq \powerset(V^\star)$. Then:
	\begin{enumerate}[(x)]
		\item $\delta_{x, y}(X \cup Y) = \delta_{x, y}(X) \cup \delta_{x, y}(Y)$.
		\item If $x \leq y \leq z$, then $\delta_{y, x}(\delta_{z, y}(X)) = \delta_{z, x}(X)$.
		\item If $f \colon [0, \infty] \to [0, \infty]$ is injective, then $\delta_{x, y}(X) = \delta_{f(x), f(y)}(X)$.
		\item $\emptyset \concat X = X \concat \emptyset = \emptyset$.
		\item $\{\langle\rangle\} \concat X = X \concat \{\langle\rangle\} = X$.
		\item $X \concat \delta_{x, y}(Y) = \delta_{x, y}(X \concat Y) = \delta_{x, y}(X) \concat Y$. 
		\item $X \concat (Y \concat Z) = (X \concat Y) \concat Z$.
		\item $X \concat (Y \cup Z) = (X \concat Y) \cup (X \concat Z)$ and $(Y \cup Z) \concat X = (Y \concat X) \cup (Z \concat X)$.
	\end{enumerate}
\end{lem}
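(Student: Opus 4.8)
The plan is to dispatch all eight items by elementary case analysis, using two observations: $\delta_{x,y}$ depends only on whether $x = y$ (acting as the identity if so, and sending everything to $\emptyset$ if not), and $\concat$ on subsets of $V^\star$ inherits all of its structure from string concatenation on $V^\star$. For the three items mentioning only $\delta$, I would split on whether $x = y$. In the case $x = y$ every occurrence of $\delta_{x,y}(\cdot)$ is the identity, so (i) reads $X \cup Y = X \cup Y$; in the case $x \neq y$ every occurrence returns $\emptyset$, so (i) reads $\emptyset = \emptyset$. Item (iii) is the remark that injectivity of $f$ makes $x = y$ equivalent to $f(x) = f(y)$, so $\delta_{x,y}$ and $\delta_{f(x),f(y)}$ coincide as maps on $\powerset(V^\star)$.

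Item (ii) is the one spot that needs a little care, and I would flag it as the main (minor) obstacle: the hypothesis $x \leq y \leq z$ is genuinely used. If $x = z$, then the sandwiching forces $x = y = z$, so both $\delta_{y,x}(\delta_{z,y}(X))$ and $\delta_{z,x}(X)$ equal $X$; without the ordering one could have $x = z \neq y$, and then the two sides would disagree. If instead $x \neq z$, then not all three scalars are equal, so $x \neq y$ or $y \neq z$, and in either subcase the nested $\delta$'s on the left collapse to $\emptyset$, matching $\delta_{z,x}(X) = \emptyset$ on the right.

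The remaining items (iv), (v), (vii), (viii) together amount to the standard fact that $(\powerset(V^\star), \cup, \concat, \emptyset, \{\langle\rangle\})$ is the ``semiring of formal languages'', and I would prove each by unwinding the definition $X \concat Y = \{\kappa \concat \lambda \mid \kappa \in X, \lambda \in Y\}$: (iv) because an empty index set produces an empty set; (v) because $\langle\rangle$ is a two-sided identity for string concatenation; (vii) because both sides unwind to $\{\kappa \concat \lambda \concat \mu \mid \kappa \in X, \lambda \in Y, \mu \in Z\}$ by associativity of string concatenation; and (viii) because the image of a union under $(\kappa, \lambda) \mapsto \kappa \concat \lambda$ is the union of the images. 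Finally (vi) follows by the same two-case split as the other $\delta$ items: when $x = y$ all three expressions equal $X \concat Y$, and when $x \neq y$ all three equal $\emptyset$, evaluating $X \concat \emptyset$ and $\emptyset \concat Y$ via (iv). None of this goes beyond routine manipulation, so the actual write-up will be brief.
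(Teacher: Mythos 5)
Your proposal is correct and follows essentially the same route as the paper's proof: case-split on $x = y$ versus $x \neq y$ for the $\delta$ items (with the sandwiching argument $x \le y \le z \Rightarrow x = y = z$ when $x = z$ for item (ii)), and unwind the definition of $X \concat Y$ for the concatenation items, using associativity of string concatenation and distribution of membership over unions. Your added remark that the ordering hypothesis in (ii) is genuinely needed (e.g.\ $x = z \neq y$ would break it) is a correct observation beyond what the paper states, but the argument itself is the same.
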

\begin{proof} \leavevmode
\begin{enumerate}[(x)]
	\item If $x = y$, then $\delta_{x, y}(X \cup Y) = X \cup Y = \delta_{x, y}(X) \cup \delta_{x, y}(Y)$. Otherwise, $\delta_{x, y}(X \cup Y) = \emptyset = \emptyset \cup \emptyset = \delta_{x, y}(X) \cup \delta_{x, y}(Y)$.
	\item If $x = z$, then $x \leq y \leq z$ implies $x = y = z$. As such, $\delta_{y, x}(\delta_{z, y}(X)) = \delta_{y, x}(X) = X = \delta_{z, x}(X)$. On the other hand, if $x \neq z$, then $x \leq y \leq z$ implies either $x \neq y$ or $y \neq z$; in the former case we have $\delta_{y, x}(\delta_{z, y}(X)) = \emptyset = \delta_{z, x}(X)$ and in the latter case we have $\delta_{y, x}(\delta_{z, y}(X)) = \delta_{y, x}(\emptyset) = \emptyset = \delta_{z, x}(X)$. 
	\item Since $f$ is injective, $x = y$ if and only if $f(x) = f(y)$, so $\delta_{x, y}(-) = \delta_{f(x), f(y)}(-)$.
	\item $\emptyset \concat X = \{ \kappa \concat \lambda \mid \kappa \in \emptyset, \lambda \in X \}$. Since there are no such $\kappa$ in $\emptyset$, this shows that $\emptyset \concat X = \emptyset$. Showing $X \concat \emptyset = \emptyset$ is analogous.
	\item $\{\langle\rangle\} \concat X = \{ \kappa \concat \lambda \mid \kappa \in \{\langle\rangle\}, \lambda \in X \} = \{ \langle\rangle \concat \lambda \mid \lambda \in X \} = \{\lambda \mid \lambda \in X\} = X$. Showing $X \concat \{\langle\rangle\} = X$ is analogous.
	\item If $x = y$, then $X \concat \delta_{x, y}(Y) = X \concat Y = \delta_{x, y}(X \concat Y)$. If $x \neq y$, then $X \concat \delta_{x, y}(Y) = X \concat \emptyset = \emptyset = \delta_{x, y}(X \concat Y)$ by using (d) above. Showing $\delta_{x, y}(X) \concat Y = \delta_{x, y}(X \concat Y)$ is analogous.
	\item Concatenation is associative, so 
	\begin{align*}
		X \concat (Y \concat Z) & = \{ \kappa \concat \lambda \mid \kappa \in X, \lambda \in Y \concat Z \} \\
		& = \{ \kappa \concat (\lambda \concat \mu) \mid \kappa \in X, \lambda \in Y, \mu \in Z \} \\
		& = \{ (\kappa \concat \lambda) \concat \mu \mid \kappa \in X, \lambda \in Y, \mu \in Z \} \\
		& = \{ \kappa \concat \mu \mid \kappa \in X \concat Y, \mu \in Z \} \\
		& = (X \concat Y) \concat Z.
	\end{align*}
	\item As $\wedge$ (logical AND) distributes over $\vee$ (logical OR)
	\begin{align*}
		& X \concat (Y \cup Z) \\
		& = \{ \kappa \concat \lambda \mid \kappa \in X, \lambda \in Y \cup Z \} \\
		& = \{ \kappa \concat \lambda \mid \kappa \in X \wedge (\lambda \in Y \vee \lambda \in Z) \} \\
		& = \{ \kappa \concat \lambda \mid (\kappa \in X \wedge \lambda \in Y) \vee (\kappa \in X \wedge \lambda \in Z) \} \\
		& = \{ \kappa \concat \lambda \mid \kappa \in X, \lambda \in Y \} \cup \{ \kappa \concat \lambda \mid \kappa \in X, \lambda \in Z \} \\
		& = (X \concat Y) \cup (X \concat Z).
	\end{align*}
	Showing $(Y \cup Z) \concat X = (Y \concat X) \cup (Z \concat X)$ is analogous.
\end{enumerate}
\end{proof}

Now suppose $x, y, z \in [0, \infty]$ and $X, Y, Z \subseteq \powerset(V^\star)$ are arbitrary.
\begin{itemize}
	\item Commutativity of $\tilde{\oplus}$: Both $\min$ and $\cup$ are commutative, hence
	\begin{align*}
		& (x, X) \mathbin{\tilde{\oplus}} (y, Y) \\
		& = \bigl( \min(x, y), \delta_{x, \min(x, y)}(X) \cup \delta_{y, \min(x, y)}(Y) \bigr) \\
		& = \bigl( \min(y, x), \delta_{y, \min(y, x)}(Y) \cup \delta_{x, \min(y, x)}(X) \bigr) \\
		& = (y, Y) \mathbin{\tilde{\oplus}} (x, X).
	\end{align*}
	
	\item Associativity of $\tilde{\oplus}$: Noting that $\min(x, y, z) \leq \min(y, z) \leq y, z$, applying Lemma~\ref{tilde oplus lemma}(x, y) shows:
	\begin{align*}
		& (x, X) \mathbin{\tilde{\oplus}} \bigl( (y, Y) \mathbin{\tilde{\oplus}} (z, Z) \bigr) \\
		& = (x, X) \mathbin{\tilde{\oplus}} \bigl( \min(y, z), \delta_{y, \min(y, z)}(Y) \cup \delta_{z, \min(y, z)}(Z) \bigr) \\
		& = \bigl( \min(x, \min(y, z)), \\
		& \hphantom{{} = \bigl(} \delta_{x, \min(x, \min(y, z))}(X) \\
		& \hphantom{{} = \bigl(} \cup \delta_{\min(y, z), \min(x, \min(y, z))}(\delta_{y, \min(y, z)}(Y) \\
		& \hphantom{ \hphantom{{} = \bigl(} \cup \delta_{\min(y, z), \min(x, \min(y, z))}( } \cup \delta_{z, \min(y, z)}(Z)) \bigr) \\
		& = \bigl( \min(x, y, z), \\
		& \hphantom{{} = \bigl(} \delta_{x, \min(x, y, z)}(X) \\
		& \hphantom{{} = \bigl(} \cup \delta_{\min(y, z), \min(x, y, z)}(\delta_{y, \min(y, z)}(Y)) \\
		& \hphantom{{} = \bigl(} \cup \delta_{\min(y, z), \min(x, y, z)}(\delta_{z, \min(y, z)}(Z)) \bigr) \\
		& = \bigl( \min(x, y, z), \\
		& \hphantom{{} = \bigl(} \delta_{x, \min(x, y, z)}(X) \cup \delta_{y, \min(x, y, z)}(Y) \cup \delta_{z, \min(x, y, z)}(Z) \bigr)
	\end{align*}
	As this final expression is symmetric in $x$, $y$, and $z$, using the commutativity of $\tilde{\oplus}$ shown above yields
	\begin{align*}
		& (x, X) \mathbin{\tilde{\oplus}} \bigl( (y, Y) \mathbin{\tilde{\oplus}} (z, Z) \bigr) \\
		& = \bigl( \min(x, y, z), \\
		& \hphantom{{} = \bigl(} \delta_{x, \min(x, y, z)}(X) \cup \delta_{y, \min(x, y, z)}(Y) \cup \delta_{z, \min(x, y, z)}(Z) \bigr) \\
		& = \bigl( \min(z, x, y), \\
		& \hphantom{{} = \bigl(} \delta_{z, \min(z, x, y)}(Z) \cup \delta_{x, \min(z, x, y)}(X) \cup \delta_{y, \min(z, x, y)}(Y) \bigr) \\
		& = (z, Z) \mathbin{\tilde{\oplus}} \bigl( (x, X) \mathbin{\tilde{\oplus}} (y, Y) \bigr) \\
		& = \bigl( (x, X) \mathbin{\tilde{\oplus}} (y, Y) \bigr) \mathbin{\tilde{\oplus}} (z, Z).
	\end{align*}
	
	\item Associativity of $\tilde{\otimes}$: Using the associativity of $+$ along with Lemma~\ref{tilde oplus lemma}(g) shows
	\begin{align*}
		& (x, X) \mathbin{\tilde{\otimes}} \bigl( (y, Y) \mathbin{\tilde{\oplus}} (z, Z) \bigr) \\
		& = (x, X) \mathbin{\tilde{\otimes}} (y + z, Y \concat Z) \\
		& = \bigl( x + (y + z), X \concat (Y \concat Z) \bigr) \\
		& = \bigl( (x + y) + z, (X \concat Y) \concat Z \bigr) \\
		& = (x + y, X \concat Y) \mathbin{\tilde{\otimes}} (z, Z) \\
		& = \bigl( (x, X) \mathbin{\tilde{\otimes}} (y, Y) \bigr) \mathbin{\tilde{\otimes}} (z, Z).
	\end{align*}
	
	\item $0^{\tilde{\mathbb{S}}} \coloneq (\infty, \emptyset)$ is an identity for $\tilde{\oplus}$:
	\begin{align*}
		& (x, X) \mathbin{\tilde{\oplus}} (\infty, \emptyset) \\
		& = \bigl( \min(x, \infty), \delta_{x, \min(x, \infty)}(X) \cup \delta_{\infty, \min(x, \infty)}(\emptyset) \bigr) \\
		&  = (x, \delta_{x, x}(X) \cup \emptyset) \\
		& = (x, X)
	\end{align*}
	
	\item $1^{\tilde{\mathbb{S}}} \coloneq (0, \{\langle\rangle\})$ is an identity for $\tilde{\otimes}$: Using Lemma~\ref{tilde oplus lemma}(e) shows
	\begin{equation*}
		(x, X) \mathbin{\tilde{\otimes}} (0, \{\langle\rangle\}) = ( x + 0, \{\langle\rangle\} \concat X ) = (x, X).
	\end{equation*}
	Showing $(0, \{\langle\rangle\}) \mathbin{\tilde{\otimes}} (x, X) = (x, X)$ is analogous.
	
	\item $0^{\tilde{\mathbb{S}}} \coloneq (\infty, \emptyset)$ is an annihilator for $\tilde{\otimes}$: Using Lemma~\ref{tilde oplus lemma}(d) shows
	\begin{equation*}
		(x, X) \mathbin{\tilde{\otimes}} (\infty, \emptyset) = (x + \infty, X \concat \emptyset) = (\infty, \emptyset).
	\end{equation*}
	Showing $(\infty, \emptyset) \mathbin{\tilde{\otimes}} (x, X) = (\infty, \emptyset)$ is analogous.
		
	\item Distributivity of $\tilde{\otimes}$ over $\tilde{\oplus}$: Using Lemma~\ref{tilde oplus lemma}(z, f, h) shows
	\begin{align*}
		& (x, X) \mathbin{\tilde{\otimes}} \bigl( (y, Y) \mathbin{\tilde{\oplus}} (z, Z) \bigr) \\
		& = (x, X) \mathbin{\tilde{\otimes}} \bigl( \min(y, z), \delta_{y, \min(y, z)}(Y) \cup \delta_{z, \min(y, z)}(Z) \bigr) \\
		& = \bigl( x + \min(y, z), X \concat (\delta_{y, \min(y, z)}(Y) \cup \delta_{z, \min(y, z)}(Z)) \bigr) \\
		& = \bigl( \min(x + y, x + z), \\
		& \hphantom{{} = \big(} (X \concat \delta_{y, \min(y, z)}(Y)) \cup (X \concat \delta_{z, \min(y, z)}(Z)) \bigr) \\
		& = \bigl( \min(x + y, x + z), \\
		& \hphantom{{} = \big(} \delta_{y, \min(y, z)}(X \concat Y) \cup \delta_{z, \min(y, z)}(X \concat Z) \bigr) \\
		& = \bigl( \min(x + y, x + z), \\
		& \hphantom{{} = \big(} \delta_{x + y, \min(x + y, x + z)}(X \concat Y) \\
		& \hphantom{{} = \big(} \cup \delta_{x + z, \min(x + y, x + z)}(X \concat Z) \bigr) \\
		& = (x + y, X \concat Y) \mathbin{\tilde{\oplus}} (x + z, X \concat Z) \\
		& = \bigl( (x, X) \mathbin{\tilde{\otimes}} (y, Y) \bigr) \mathbin{\tilde{\oplus}} \bigl( (x, X) \mathbin{\tilde{\otimes}} (z, Z) \bigr).
	\end{align*}
	Showing $\bigl( (y, Y) \mathbin{\tilde{\oplus}} (z, Z) \bigr) \mathbin{\tilde{\otimes}} (x, X) = \bigl( (y, Y) \mathbin{\tilde{\otimes}} (x, X) \bigr) \mathbin{\tilde{\oplus}} \bigl( (z, Z) \mathbin{\tilde{\otimes}} (x, X) \bigr)$ is analogous.
\end{itemize}

Finally, we show that $(\mathbb{S}', \oplus', \otimes', 0^{\mathbb{S}'}, 1^{\mathbb{S}'})$ is a semiring. To that effect, let $X, Y, Z \in \mathbb{S}'$ be arbitrary. We additionally extend $\otimes'$ to apply to all elements of $\powerset(\mathbb{S} \times \mathbb{S} \times \mathbb{S} \times V)$.
\begin{itemize}
	\item Associativity of $\oplus'$: Immediate.
	\item Commutativity of $\oplus'$: Immediate.
	\item Associativity of $\otimes'$: We start by proving the following lemma.
	
	\begin{lem}
		Given $X, Y \in \powerset(\mathbb{S} \times \mathbb{S} \times \mathbb{S} \times V)$, then $X \mathbin{\otimes'} (Y \cap \mathbb{T}) = X \mathbin{\otimes'} Y = (X \cap \mathbb{T}) \mathbin{\otimes'} Y$.
	\end{lem}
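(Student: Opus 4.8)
The plan is to prove both equalities in the chain by the same short argument, which rests entirely on the fact that $0^\mathbb{S}$ is an annihilator for $\otimes$ in $\mathbb{S}$. The conceptual point is that the trailing intersection with $\mathbb{T}$ in the definition of $X \otimes' Y$ already discards every output tuple produced from an input tuple that has a zero coordinate, so pre-filtering either factor through $\cap\,\mathbb{T}$ cannot affect the result. This is exactly what will allow the subsequent associativity computation to manipulate the un-intersected sets $\{(x_1 \otimes y_1, x_2 \otimes y_2, x_3 \otimes y_3, u) \mid (x_1,x_2,x_3,u) \in X,\ (y_1,y_2,y_3,u) \in Y\}$ freely. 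Note that here $\otimes'$ is read as a map on all of $\powerset(\mathbb{S} \times \mathbb{S} \times \mathbb{S} \times V)$, as the surrounding text stipulates, since $X$ and $Y$ in the statement are not assumed to lie in $\mathbb{S}' = \powerset(\mathbb{T})$.

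First I would dispatch the two ``easy'' inclusions. Since $X \cap \mathbb{T} \subseteq X$ and the defining formula for $\otimes'$ is visibly monotone in each argument (enlarging a factor can only enlarge the collection of tuples producible before, and hence after, intersecting with $\mathbb{T}$), we get $(X \cap \mathbb{T}) \otimes' Y \subseteq X \otimes' Y$ and likewise $X \otimes' (Y \cap \mathbb{T}) \subseteq X \otimes' Y$, using no semiring axioms at all.

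For the reverse inclusions I would take an arbitrary $(z_1, z_2, z_3, u) \in X \otimes' Y$ and fix witnesses $(x_1, x_2, x_3, u) \in X$ and $(y_1, y_2, y_3, u) \in Y$ with $z_i = x_i \otimes y_i$ for $i = 1,2,3$. Membership in $X \otimes' Y$ forces $(z_1, z_2, z_3, u) \in \mathbb{T}$, i.e. $z_i \neq 0^\mathbb{S}$ for each $i$. Because $0^\mathbb{S}$ annihilates $\otimes$, having $x_i = 0^\mathbb{S}$ or $y_i = 0^\mathbb{S}$ would force $z_i = 0^\mathbb{S}$; hence $x_i \neq 0^\mathbb{S}$ and $y_i \neq 0^\mathbb{S}$ for all $i$, so $(x_1, x_2, x_3, u) \in X \cap \mathbb{T}$ and $(y_1, y_2, y_3, u) \in Y \cap \mathbb{T}$. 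Feeding the same witnesses back through the definition of $\otimes'$ then yields $(z_1, z_2, z_3, u) \in (X \cap \mathbb{T}) \otimes' Y$ and $(z_1, z_2, z_3, u) \in X \otimes' (Y \cap \mathbb{T})$, which closes both equalities.

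There is no genuine obstacle here; the only thing to be careful about is not to over-claim. The implication used is $z_i \neq 0^\mathbb{S} \Rightarrow x_i \neq 0^\mathbb{S} \wedge y_i \neq 0^\mathbb{S}$, and it follows from the annihilator axiom alone — in particular it does not, and need not, assume that $\mathbb{S}$ is free of zero divisors, which is fortunate since the GraphBLAS semirings of interest typically are not. It is also worth flagging that the lemma is precisely the tool that reduces associativity of $\otimes'$ to associativity of $\otimes$ on each of the three numeric coordinates together with the set-theoretic bookkeeping on the $V$-coordinate, so its placement before the associativity argument is essential.
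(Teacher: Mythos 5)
Your proof is correct and is essentially the paper's argument: both hinge on the easy monotonicity inclusion plus the fact that $0^\mathbb{S}$ is an annihilator for $\otimes$, so tuples with a zero coordinate can never contribute to $X \mathbin{\otimes'} Y$ after the trailing intersection with $\mathbb{T}$. The only difference is presentational---you chase a surviving element back to nonzero witnesses (the contrapositive), while the paper notes directly that elements of $Y \setminus \mathbb{T}$ (resp.\ $X \setminus \mathbb{T}$) contribute nothing.
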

	\begin{proof}
		As $Y \cap \mathbb{T} \subseteq Y$, we have $X \mathbin{\otimes'} (Y \cap \mathbb{T}) \subseteq X \mathbin{\otimes'} Y$. If $(y_1, y_2, y_3, u) \in Y$ is such that either $y_1 = 0^\mathbb{S}$, $y_2 = 0^\mathbb{S}$, or $y_3 = 0^\mathbb{S}$, then $(x_1 \otimes y_1, x_2 \otimes y_2, x_3 \otimes y_3, u) \notin \mathbb{T}$ for any values of $x_1, x_2, x_3 \in \mathbb{S}$ since $0^\mathbb{S}$ is an annihilator for $\otimes$ and hence at least one of $x_1 \otimes y_1$, $x_2 \otimes y_2$, $x_3 \otimes y_3$ equal $0^\mathbb{S}$. This shows that $Y \setminus \mathbb{T}$ contributes no elements to $X \mathbin{\otimes'} Y$.
		
		Proving $(X \cap \mathbb{T}) \mathbin{\otimes'} Y = X \mathbin{\otimes'} Y$ is analogous.
	\end{proof}
	
	Using the above lemma shows
	\begin{align*}
		& X \mathbin{\otimes'} (Y \mathbin{\otimes'} Z) \\
		& = X \mathbin{\otimes'} \bigl( \{ (y_1 \otimes z_1, y_2 \otimes z_2, y_3 \otimes z_3, u) \mid \\
		& \hphantom{{} = X \mathbin{\otimes'} \bigl( \{ } (y_1, y_2, y_3, u) \in Y, (z_1, z_2, z_3, u) \in Z \} \cap \mathbb{T} \bigr) \\
		& = X \mathbin{\otimes'} \{ (y_1 \otimes z_1, y_2 \otimes z_2, y_3 \otimes z_3, u) \mid \\
		& \hphantom{{} = X \mathbin{\otimes'} \{ } (y_1, y_2, y_3, u) \in Y, (z_1, z_2, z_3, u) \in Z \} \\
		& = \{ (x_1 \otimes (y_1 \otimes z_1), x_2 \otimes (y_2 \otimes z_2), x_3 \otimes (y_3 \otimes z_3), u) \mid \\
		& \hphantom{{} = \{} (x_1, x_2, x_3, u) \in X, (y_1, y_2, y_3, u) \in Y, \\
		& \hphantom{{} = \{} (z_1, z_2, z_3, u) \in Z \} \cap \mathbb{T} \\
		& = \{ ((x_1 \otimes y_1) \otimes z_1, (x_2 \otimes y_2) \otimes z_2, (x_3 \otimes y_3) \otimes z_3, u) \mid \\
		& \hphantom{{} = \{} (x_1, x_2, x_3, u) \in X, (y_1, y_2, y_3, u) \in Y, \\
		& \hphantom{{} = \{} (z_1, z_2, z_3, u) \in Z \} \cap \mathbb{T} \\
		& = \{ (x_1 \otimes y_1, x_2 \otimes y_2, x_3 \otimes y_3, u) \mid \\
		& \hphantom{{} = \{} (x_1, x_2, x_3, u) \in X, (y_1, y_2, y_3, u) \in Y \} \mathbin{\otimes'} Z \\
		& = \bigl( \{ (x_1 \otimes y_1, x_2 \otimes y_2, x_3 \otimes y_3, u) \mid \\
		& \hphantom{{} = \bigl( \{} (x_1, x_2, x_3, u) \in X, (y_1, y_2, y_3, u) \in Y \} \cap \mathbb{T} \bigr) \mathbin{\otimes'} Z \\
		& = (X \mathbin{\otimes'} Y) \mathbin{\otimes'} Z.
	\end{align*}
	\item $0^{\mathbb{S}'} \coloneq \emptyset$ is an identity for $\oplus'$: Immediate.
	\item $1^{\mathbb{S}'} \coloneq \{ (1^\mathbb{S}, 1^\mathbb{S}, 1^\mathbb{S}, u) \mid u \in V \}$ is an identity for $\otimes'$:
	\begin{align*}
		X \mathbin{\otimes'} 1^{\mathbb{S}'} & = \{ (x_1 \otimes y_1, x_2 \otimes y_2, x_3 \otimes y_3, u) \mid \\
		& \hphantom{{} = \{} (x_1, x_2, x_3, u) \in X, (y_1, y_2, y_3, u) \in 1^{\mathbb{S}'} \} \cap \mathbb{T} \\
		& = \{ (x_1 \otimes 1^\mathbb{S}, x_2 \otimes 1^\mathbb{S}, x_3 \otimes 1^\mathbb{S}, u) \mid \\
		& \hphantom{{} = \{} (x_1, x_2, x_3, u) \in X, (1^\mathbb{S}, 1^\mathbb{S}, 1^\mathbb{S}, u) \in 1^{\mathbb{S}'} \} \cap \mathbb{T} \\
		& = \{ (x_1, x_2, x_3, u) \mid \\
		& \hphantom{{} = \{} (x_1, x_2, x_3, u) \in X, (1^\mathbb{S}, 1^\mathbb{S}, 1^\mathbb{S}, u) \in 1^{\mathbb{S}'} \} \cap \mathbb{T} \\
		& = \{ (x_1, x_2, x_3, u) \mid (x_1, x_2, x_3, u) \in X \} \cap \mathbb{T} \\
		& = X.
	\end{align*}
	Showing $1^{\mathbb{S}'} \mathbin{\otimes'} X = X$ is analogous.
	\item $0^{\mathbb{S}'} \coloneq \emptyset$ is an annihilator for $\otimes'$: 
	\begin{align*}
		X \mathbin{\otimes'} 0^{\mathbb{S}'} & = \{ (x_1 \otimes y_1, x_2 \otimes y_2, x_3 \otimes y_3, u) \mid \\
		& \hphantom{{} = \{} (x_1, x_2, x_3, u) \in X, (y_1, y_2, y_3, u) \in 0^{\mathbb{S}'} \} \\
		& = \emptyset.
	\end{align*}
	Showing $0^{\mathbb{S}'} \mathbin{\otimes'} X = 0^{\mathbb{S}'}$ is analogous.
	\item Distributivity of $\otimes'$ over $\oplus'$: Analogous to the proof of Lemma~\ref{tilde oplus lemma}(h), noting that $\cap$ distributes over $\cup$ to account for the intersections with $\mathbb{T}$.
\end{itemize}

\section*{Appendix B: Proofs of Propositions}
\label{appendix B}

\begin{proof}[Proof of Proposition~\ref{paths of optimal weight proposition}.]
	We proceed by induction on $n \geq 1$.
	\begin{description}
		\item[Base case.] When $n = 1$, we have 
		\begin{equation*}
			\mat{B}_1(u, v) = \real(\tilde{\mat{A}})(u, v) = \bigl( \mat{A}(u, v), \{\langle u, v \rangle\} \bigr).
		\end{equation*}
		The only $1$-hop path from $u$ to $v$ is $\langle u, v \rangle$, so $(\mat{A}(u, v), \{\langle u, v \rangle\})$ trivially gives the least weight of any $1$-hop path from $u$ to $v$ along with the set of all such $1$-hop paths.

		\item[Induction step.] Assume for our induction hypothesis that for all $u, v \in V$ it is the case that $\mat{B}_n(u, v) = (a, A)$ satisfies that $a$ is the least weight of any $n$-hop path from $u$ to $v$ and $A$ is the set of all such $n$-hop paths having weight $a$. Given any $u, v \in V$, 
		\begin{align*}
			\mat{B}_{n+1}(u, v) & = \bigl( \mat{B}_n \arrayprod{\tilde{\oplus}}{\tilde{\otimes}} \imaginary(\tilde{\mat{A}}) \bigr)(u, v) \\
			& = \widetilde{\bigoplus_{w \in V}}{\Bigl( \mat{B}_n(u, w) \tilde{\otimes} \bigl( \mat{A}(w, v), \{\langle v \rangle\} \bigr) \Bigr)} \\
			& = \widetilde{\bigoplus_{w \in V}}{\Bigl( (x_w, X_w) \tilde{\otimes} \bigl( \mat{A}(w, v), \{\langle v \rangle\} \bigr) \Bigr)} \\
			& = \widetilde{\bigoplus_{w \in V}}{\bigl( x_w + \mat{A}(w, v), \{\lambda \concat \langle v \rangle \mid \lambda \in X_w\} \bigr)},
		\end{align*}
		where $X_w$ is the set of minimal-weight $n$-hop paths from $u$ to $w$ and $x_w$ is that minimal weight. 
		
		For any $w \in V$ such that $(w, v) \in E$, $\{\lambda \concat \langle v \rangle \mid \lambda \in X_w\}$ is the set of all minimal-weight $(n+1)$-hop paths from $u$ to $v$ that pass through $w$ immediately before ending at $v$; moreover, the weight of all such $(n+1)$-hop paths are $x_w + \mat{A}(w, v)$. Any $(n+1)$-path $\kappa$ from $u$ to $v$ of minimal weight must pass through some $w$ immediately before ending at $v$, meaning we may write $\kappa = \lambda \concat \langle v \rangle$ for some $n$-hop path $\lambda$ from $u$ to $w$. Such a $\lambda$ must be of minimal weight amongst all $n$-hop paths from $u$ to $w$, since if $\lambda'$ were an $n$-hop path from $u$ to $w$ with strictly lower weight than that of $\lambda$, then $\lambda' \concat \langle v \rangle$ would be an $(n+1)$-hop path from $u$ to $v$ passing through $w$ immediately before ending at $v$ which has strictly lower weight than $\kappa$, contrary to hypothesis. As such, $\kappa$ has weight $\mathop{\mathrm{w}}(\lambda) + \mat{A}(w, v) = x_w + \mat{A}(w, v)$ and so $\kappa$ and its weight are represented by some summand of the summation defining $\mat{B}_{n+1}(u, v)$ above. 		
	\end{description}
\end{proof}

\begin{proof}[Proof of Proposition~\ref{catvalmul and catvalmul complex computation}.]
	Suppose $u, v \in V$. Then
	\begin{align*}
		& \bigl( \imaginary(\mat{A}') \arrayprod{\oplus'}{\otimes'} \real(\mat{B}') \bigr)(u, v) \\
		& = \bigoplus\nolimits'\{ \{(\mat{A}(u, w), 1^\mathbb{S}, \mat{A}(u, w), w)\} \\
		& \hphantom{{} = \bigoplus\nolimits'\{} \mathbin{\otimes'} \{(1^\mathbb{S}, \mat{B}(w, v), \mat{B}(w, v), w)\} \mid \\
		& \hphantom{{} = \bigoplus\nolimits'\{} w \in V, \mat{A}(u, w) \neq 0^\mathbb{S}, \mat{B}(w, v) \neq 0^\mathbb{S} \} \\
		& \hphantom{{} = } \mathbin{\oplus'} \bigoplus\nolimits'\{ \emptyset \mathbin{\otimes'} \{(1^\mathbb{S}, \mat{B}(w, v), \mat{B}(w, v), w)\} \mid \\
		& \hphantom{{} = \mathbin{\oplus'} \bigoplus\nolimits' \{} w \in V, \mat{A}(u, w) = 0^\mathbb{S}, \mat{B}(w, v) \neq 0^\mathbb{S}\} \\
		& \hphantom{{} = } \mathbin{\oplus'} \bigoplus\nolimits'\{ \{(\mat{A}(u, w), 1^\mathbb{S}, w)\} \mathbin{\otimes'} \emptyset \mid \\
		& \hphantom{{} = \mathbin{\oplus'} \bigoplus\nolimits' \{} w \in V, \mat{A}(u, w) \neq 0^\mathbb{S}, \mat{B}(w, v) = 0^\mathbb{S}\} \\
		& \hphantom{{} = } \mathbin{\oplus'} \bigoplus\nolimits'\{ \emptyset \mathbin{\otimes'} \emptyset \mid \\
		& \hphantom{{} = \mathbin{\oplus'} \bigoplus\nolimits' \{} w \in V, \mat{A}(u, w) = 0^\mathbb{S}, \mat{B}(w, v) = 0^\mathbb{S}\} \\
		& = \bigoplus\nolimits' \{ \{(\mat{A}(u, w) \otimes 1^\mathbb{S}, 1^\mathbb{S} \otimes \mat{B}(w, v), \mat{A}(u, w) \otimes \mat{B}(w, v), w)\} \\
		& \hphantom{{} = \bigoplus\nolimits' \{ } \cap \mathbb{T} \mid w\in V, \mat{A}(u, w) \neq 0^\mathbb{S}, \mat{B}(w, v) \neq 0^\mathbb{S} \} \\
		& \hphantom{{} = } \mathbin{\oplus'} \emptyset \mathbin{\oplus'} \emptyset \mathbin{\oplus'} \emptyset \\
		& = \{(\mat{A}(u, w), \mat{B}(w, v), \mat{A}(u, w) \otimes \mat{B}(w, v), w) \mid \\
		& \hphantom{{} = \{ } w \in V, \mat{A}(u, w) \otimes \mat{B}(w, v) \neq 0^\mathbb{S} \}
	\end{align*}
\end{proof}

\end{document}